\newtheorem{theorem}{Theorem}[section]
\newtheorem{corollary}[theorem]{Corollary}
\newtheorem{lemma}[theorem]{Lemma}
\newtheorem{proposition}[theorem]{Proposition}
\newtheorem{problem}[theorem]{Problem}
\theoremstyle{definition}
\newtheorem{definition}[theorem]{Definition}
\theoremstyle{remark}
\newtheorem{remark}[theorem]{\textbf{Remark}}
\newtheorem{example}[theorem]{Example}
\numberwithin{equation}{section}
\newcommand{\E}{\mathbb{E}}
\newcommand{\X}{\mathbb{X}}
\renewcommand{\P}{\mathbb{P}}
\newcommand{\Q}{\mathbb{Q}}
\newcommand{\D}{\Delta}
\newcommand{\Sphere}{\mathbb{S}}
\newcommand{\di}{\mathrm{d}}
\newcommand{\half}{\dfrac{1}{2}}
\newcommand{\R}{\mathbb{R}}
\newcommand{\Rp}{\R_+}
\newcommand{\as}{a.s.}
\newcommand{\ie}{i.e.}
\newcommand{\eg}{e.g.}
\newcommand{\cf}{c.f.}
\newcommand{\dx}{\mathrm{d}x}
\newcommand{\dy}{\mathrm{d}y}
\newcommand{\ds}{\mathrm{d}s}
\newcommand{\du}{\mathrm{d}u}
\newcommand{\dt}{\mathrm{d}t}
\newcommand{\pd}[2]{\frac{\partial #1}{\partial #2}}
\newcommand{\Fc}{\mathcal{F}}
\newcommand{\Gc}{\mathcal{G}}
\newcommand{\Hc}{\mathcal{H}}
\newcommand{\eps}{\varepsilon}
\newcommand{\indic}[1]{\boldsymbol{1}_{\{\ensuremath{#1}\}}}
\newcommand{\ind}{\boldsymbol{1}}
\newcommand{\T}{\mathrm{T}}
\DeclareMathOperator{\tr}{tr}
\newcommand{\Ep}[1]{\E\left[#1\right]}
\newcommand{\Eq}[1]{\E^{\Q}\left[#1\right]}
\renewcommand{\Pr}{\P}
\renewcommand{\vec}[1]{\mathbf{#1}}
\newcommand{\Nb}{\mathbb{N}}
\newcommand{\M}{\mathcal{M}}
\newcommand{\Pc}{\mathcal{P}}
\newcommand{\Ac}{\mathcal{A}}
\newcommand{\Oc}{\mathcal{O}}
\newcommand{\Bc}{\mathcal{B}}
\newcommand{\Wc}{\mathcal{W}}
\newcommand{\cadlag}{c\`adl\`ag}
\newcommand{\ol}{\overline}
\newcommand{\ul}{\underline}
\definecolor{orange}{rgb}{1,0.3,0.2}
\newcommand{\average}{\left(1-\boldsymbol{1}\cdot \pmb{\xi}^\alpha,\pmb{\xi}^\alpha\right)}
\newcommand{\Db}{\mathbb{D}}
\title[Model-independent bounds for Asian options]{Model-independent bounds for Asian options: a dynamic programming approach}
\author{Alexander M.~G.~Cox}
\thanks{Alexander M.~G.~Cox, Department of Mathematical
  Sciences, University of Bath, Bath, U.~K..\\ e-mail: \texttt{a.m.g.cox@bath.ac.uk}, web: \texttt{http://www.maths.bath.ac.uk/$\sim$mapamgc/}}
\author{Sigrid K\"{a}llblad}
\thanks{Sigrid K\"{a}llblad, CMAP, \'Ecole Polytechnique, Paris, France.\\ e-mail: \texttt{sigrid.kallblad@cmap.polytechnique.fr}}
\thanks{This project began while AC was visiting \'Ecole Polytechnique. SK gratefully acknowledges the financial support of the ERC 321111 Rofirm, the ANR Isotace, the Chairs Financial Risks (Risk Foundation, sponsored by Soci\'et\'e G\'en\'erale), Finance and Sustainable Development (IEF sponsored by EDF and CA)}
\date{\today}
\begin{document}
\begin{abstract}
  We consider the problem of finding model-independent bounds on the price of an Asian option, when the call prices at the maturity date of the option are known. Our methods differ from most approaches to model-independent pricing in that we consider the problem as a dynamic programming problem, where the controlled process is the conditional distribution of the asset at the maturity date. By formulating the problem in this manner, we are able to determine the model-independent price through a PDE formulation. Notably, this approach does not require specific constraints on the payoff function (\eg{} convexity), and would appear to generalise to many related problems.
\end{abstract}
\maketitle

\section{Introduction} \label{sec:Introduction}

Since the seminal paper of \citet{Hobson:1998aa}, there has been substantial interest in questions of the following form: given an asset with price $(S_t)_{t \in [0,T]}$, a derivative whose payoff, $X_T$, depends on the path of the asset, and the prices of call options at maturity time $T$, find a static portfolio of calls, and a dynamic trading strategy in the asset which superhedges the derivative at time $T$, under (essentially) \emph{any} model for the asset. The class of models considered are usually very large (for example, all models with continuous paths), and so the resulting price is usually called the model-independent superhedging price.

The problem of finding the model-independent superhedging price is closely related to the problem of identifying the largest model-based price: specifically, in a classical setting, one would expect the prices of all options to be given as the expected value under some risk-neutral measure,\footnote{For ease of presentation, we largely assume that the interest rate is zero; this has no substantial effect on our main results.} and by specifying the call prices at time $0$, the distribution of $S_T$ under this risk-neutral measure is determined. It is therefore natural to conjecture that the model-independent superhedging price is equal to $\sup_{\Q} \Eq{X_T}$, where the supremum is taken over all probability measures $\Q$ such that $(S_t)_{t \in [0,T]}$ is a martingale, and $S_T$ has the distribution determined by the call options. Recently a number of papers, starting with \citet{Beiglbock:2013ab} in discrete time, and followed up by \citet{Dolinsky:2013aa} in continuous time (see also \cite{Acciaio:2016aa,Bayraktar:2015aa,Beiglbock:2015aa,Biagini:2015aa,Dolinsky:2015aa,Hou:2015aa}), have made this result explicit under a variety of technical conditions. Note that in this formulation it is very natural to consider the supremum over the set of probability measure as a primal problem, and the infimum over the class of super-hedging strategies as the corresponding dual problem.

An alternative approach to these problems, following \citet{Hobson:1998aa}, is to use a time-change argument to reformulate the primal problem in terms of an optimisation over solutions to the Skorokhod embedding problem (SEP): that is, to argue that, up to an unknown time change $\tau_t$, the martingale $S_t = B_{\tau_t}$ is a time change of a Brownian motion. For a number of important quantities (maximum, quadratic variation, local time, \dots) the values of these quantities for the asset price up to time $T$ and for the Brownian motion up to the (stopping time) $\tau_T$ agree. It can often then be argued that the choice of a model for $S_t$ with given law, and the choice of a stopping time $\tau_T$ are equivalent provided $B_{\tau_T}$ has the required distribution (and satisfies an integrability constraint). The latter problem is well known as the Skorokhod embedding problem. A common approach to solving the model-independent superhedging problem is then to consider the corresponding Skorokhod embedding problem. If an optimal solution to this problem can be found, then it is often possible to guess the correct solution to the corresponding dual problem, and interpret this in terms of a superhedging strategy. This approach has been used in \eg{} \cite{Brown:2001aa,Cox:2008aa,Cox:2011aa,Cox:2015aa,Cox:2013ab,Cox:2013aa,Henry-Labordere:2016aa,Hobson:2002aa,Hobson:2013aa,Kallblad:2015aa}; see also the survey article of \citet{Hobson:2011aa}.

Of note in all of the known optimal solutions to the SEP is that some underlying structure is required on the form of the option payoff: for example, if we write $M_t := \sup_{u \le t} B_u$ for the maximum process, then the optimal constructions are known to maximise $\Ep{F(M_\tau)}$ over solutions to the SEP, provided that $F$ is monotonic. To the best of our knowledge, the optimal construction when $F$ is not monotonic is not known. Similarly, in the case of variance options or the local time, the function $F$ must be concave/convex in order to have a known optimal solution. In \citet{Beiglbock:2013aa}, this behaviour was explained in terms of a natural convexity property which holds when a path-swapping operation is performed. It follows from this operation that many constructions of solutions to the SEP are optimal when the payoff to be optimised has such a convexity property. However without the corresponding convexity, a `nice' description of the optimal solution seems impossible. One of the key strengths of the results described in this paper is that our methods are \emph{not} constrained by such a convexity assumption on the payoff, and therefore will work for very general payoffs.

The main results in this paper concern the case where the option described above is an Asian option, that is, $X_T = F(A_T)$, where $A_t = \int_0^t S_u \, \du$ (we omit the usual scaling factor, $\frac{1}{t}$ for notational ease), and we consider the primal version of the problem, that is, we look to maximise $\Ep{F(A_T)}$ over all price processes $(S_u)_{u \in [0,T]}$ which are martingales, and which satisfy a constraint on the terminal law, $S_T \sim \mu$. Notably, the Asian example already falls outside the case of payoffs which can easily be handled by SEP methods, since the whole time-change $(\tau_u)_{u \in[0, T]}$, and not just the final time, $\tau_T$, is already important in determining the value of $A_T$. However, in the case where the function $F$ is convex, the optimal model is still easily determined: essentially, the asset will jump to its terminal distribution immediately, and the manner in which this is done (the `embedding component') turns out to be irrelevant. This result was first given in \citet{Stebegg:14}, which, to the best of our knowledge, is the first paper to characterise optimality in a setting where the SEP approach fails, or more generally to consider a problem of this form in continuous time without using the SEP approach. The standing assumptions in \cite{Stebegg:14} are slightly different to ours --- essentially, \cite{Stebegg:14} allows a slightly more general setup (general starting measures, and discrete and continuously monitored payoffs are included) at the cost of considering only convex payoff functions (see also Section~\ref{sec:convex}). At a heuristic level, this restriction to convex functions in \cite{Stebegg:14} appears comparable to the convexity constraint described above for the SEP in determining the `simple' set of optimisers. We also observe that there is a long history of considering model-independent bounds for the prices of Asian options (\eg{} \cite{Dhaene:2002aa,Albrecher:2008aa,Albrecher:2005aa,Chen:2008aa,Deelstra:2008aa,Forde:2010aa}), although we note that, in contrast to the case considered in this paper, existing results tend to consider discretely monitored Asian options, often when call options on the underlying are traded at some or all intermediate maturities.

The novelty of our approach relates to the manner in which we formulate the problem as a dynamic programming problem. In particular, we include the conditional law of the final value of the asset price in the formulation of our problem. The condition that the process is a martingale with this conditional law is then formulated in terms of the behaviour of the conditional law. Specifically, we require the conditional law $\xi_t$ to be a \emph{measure-valued martingale}, by which we mean that $\left(\xi_t(A)\right)_{t \in [0,T]}$ is required to be a martingale for any (Borel) set $A$. We will show in Section~\ref{sec:MVM} that this condition is equivalent to the original formulation. In particular, by requiring $\xi_0 = \mu$ and requiring $\xi_T$ to be singular, we enforce the condition that the terminal law of $S_t = \int x \,\xi_t(\dx)$ is $\mu$. The concept of a measure-valued martingale is classical, (see \eg{} \citet{Dawson:1993aa,Horowitz:1985aa}; in this literature, the rather confusing terminology `martingale measure' is also common), and has appeared in the context of the SEP in \citet{Eldan:2013aa}. A key result for our purposes is that we are able to show that our value function is continuous in $\xi$, where the space of measures is equipped with the Wasserstein topology. This allows us to approximate $\xi$ by atomic measures, which enables us to reduce the whole problem to a finite-dimensional problem, at which point classical methods can be used (Section~\ref{sec:dynam-progr-probl}). We note that, in this discrete formulation, our problem could be compared to (a special case of) the problems considered in \citet{Zitkovic:2014aa,El-Karoui:2013aa,Bouchard:2015aa}, although we prove our results via more direct, classical methods. We also remark that \citet{Galichon:2014aa} have also previously used a stochastic control approach to solve a similar problem, but in a rather different manner to the approach of this paper.  In Section~\ref{sec:exampl-superh} we are able to use these results to provide concrete solutions to certain problems.

We believe that the methods and ideas we describe in this paper can be applied far beyond the case of Asian options. However, the Asian option setting does provide us with some useful structure which we are able to exploit in the construction and formulation of our problem. In particular, it is easy to show that `small' changes in the conditional terminal law result in small changes in the value function for the problem, the increase in the average, $\di A_t$, is easy to write in terms of the current conditional law, and also our underlying problem is not strongly affected by jumps in the process: particularly, the value function for the problem where the path is assumed to be continuous, and the problem where the path is assumed to be \cadlag{} are identical (although optimisers may exist in the \cadlag{} formulation, and not in the continuous formulation). In Section~\ref{sec:concl-furth-work} we discuss further extensions.

\section{Problem formulation using measure-valued martingales} \label{sec:MVM}

Consider the following problem: we have an asset $(S_t)_{t \in [0,T]}$ in a market with a riskless bank account and a time-horizon $T$, and we wish to find a model-independent super-hedge of an option which pays the holder $F(A_T)$, where $\frac{1}{T}A_T = \frac{1}{T}\int_0^T S_t \, \dt$ is the running average\footnote{We use the slightly unconventional notation $A_T = \int_{0}^T S_t \, \dt$ to avoid an unnecessary number of terms of the form $\frac{1}{T}$ in all our calculations; it is clear that this is just a scaling factor and can be removed.}. We will consider the problem where the law of the underlying asset at maturity, $S_T$, is given at time $0$, and we consider the primal optimisation problem: that is, to find the law of the process which maximises $\Ep{F(A_T)}$ subject to $S_T \sim \mu$. Here, we consider the case where the interest rate $\rho=0$, although the extension to constant interest rates is straightforward.

Our basic approach is to consider the problem as a dynamic programming
problem where the current state includes the conditional distribution
of the process at maturity. Specifically, we assume $S_T \in \Rp$, and with $\M(\Rp)$ the set of Borel measures on $\Rp$, we write
\begin{equation}
  \label{eq:M1Defn}
  \Pc^1:= \{ \mu \in \M(\Rp) : \mu(\Rp) = 1, \int |x| \, \mu(\dx) < \infty\}.
\end{equation}
Our aim is to set the problem up as a dynamic programming problem. We suppose
that the problem evolves on an artificial time horizon, $r\ge 0$, on which a
measure-valued process $(\xi_r)_{r \ge 0}, \xi_r \in \Pc^1$ evolves. We let $(T_r)_{r \ge 0}$ be an increasing
process in $[0,T]$. Our interpretation of this quantity is that $T_r$ represents
the `real' time at the artificial time $r$. Roughly, the slower $T_r$ increases,
the higher `volatility' we see in the real-time scale. We set the problem up in
this way, since we wish to allow a substantial change in the $r$ time-scale to
happen instantaneously in real time, which may correspond to jumps in the asset
price. However, we wish to maintain a `continuous' evolution of the
measure-valued process over its natural time-scale (we do not wish to deal with
jumps in the measure-valued process). The choice of the increasing processes
$(T_r)$ will form part of the control of the problem --- specifically, we optimise
over $\lambda_r \in [0,1]$ and define
\begin{equation}\label{eq:TCDefn}
  T_r = \int_{0}^r \lambda_s \, \ds.
\end{equation}

The second part of the control will be the choice of the measure-valued process $\xi$. This process will determine the conditional distribution of the asset $(S_t)$. Specifically, the initial value is $\xi_0 = \mu$, where $\mu$ is the terminal law of the asset at time $0$, and over time we suppose that $(\xi_r)$ evolves in a manner that ensures that $(S_t)$ remains a martingale.

\begin{definition}
  \label{def:MVM}
  We say that an adapted process $(\xi_r)_{r \ge 0}$ with $\xi_r\in \Pc^1$ is a {\it measure-valued
    martingale} if, for any $f \in C_b(\Rp)$, $\xi_\cdot(f):=\int f(x)\,\xi_\cdot(\dx)$ is a martingale.
\end{definition}

Note trivially that, if $f \in C_b(\Rp)$, then $\xi_r(f)$ is bounded for each $t$, and hence the martingale $\xi_\cdot(f)$ is uniformly integrable, with well defined limit $\xi_{\infty}(f)$ (in particular, $\xi_\infty$ is a measure; see \citep[Proposition~2.1]{Horowitz:1985aa}).

\begin{remark}
  \label{rem:mvm_def}
  An adapted process $(\xi_r)$ with $\xi_r \in \Pc^1$ is a measure-valued
  martingale if and only if $\xi_\cdot(A)$ is a martingale for any
  $A\in\Bc(\R)$. Indeed, the indicator function over an interval of
  $\R$ may be approximated by continuous functions, and an application
  of the monotone class theorem yields that the claim holds for any
  $A\in\Bc(\R)$ (see Lemma \ref{lem:exist_mvm} for a similar
  argument). Conversely, any $f\in C_b(\Rp)$ may be approximated from
  below by simple functions. In fact, by the same argument, we see
  that $\xi_\cdot(f)$ is a martingale for any (non-negative) measurable
  function. 
\end{remark}

\begin{remark} \label{rem:xi_version}
  Our underlying probability spaces will generally be assumed to satisfy the usual conditions. Under this assumption, of course, for every $f \in C_b(\Rp)$, the martingale $\xi_\cdot(f)$ has a \cadlag{} version. More pertinently, we can choose a version of $(\xi_r)$ such that $\xi_\cdot(f)$ is \cadlag{} for every bounded Borel function $f$, see \citep[Theorem~2.5]{Horowitz:1985aa}. In what follows, we will assume that we always take this version of $(\xi_r)$.
\end{remark}

We will think of measure-valued martingales as processes, evolving in time. We emphasise that the support of the measure-valued martingale can only ever decrease: if $\xi_{r_0}(A) = 0$ then $\xi_{r}(A) = 0$ for all $r \ge r_0$. In the particularly nice case that $\xi_{r_0}$ is an atomic measure, then for all $r \ge r_0$, the measure will also be atomic, and supported on the same set of points. In particular, the spatial distribution of such a measure will not change, only the weights attributed to each point. Since the weight associated to each point is a martingale and constrained to lie in the interval $[0,1]$, it follows that in the limit as $r \to \infty$, the weight assigned to each point must converge to a limit; commonly, this limit will be assumed to be either $0$ or $1$, and this motivates the following definitions. Consider the set of singular measures on $\Rp$, $\Pc^s = \{ \mu \in \M(\Rp) : \mu = \delta_y, y \in \Rp\}$, then:

\begin{definition}
  \label{def:TMVM}
  We say that a measure-valued martingale $(\xi_r)$ is {\it terminating}
  if $\xi_r \to \xi_\infty\in \Pc^s$ \as{} as $r\to\infty$, where the
  convergence is in the sense of weak convergence of measures. It is
  {\it finitely terminating} if $\tau_s:= \inf \{r \ge 0: \xi_r \in \Pc^s\}$ is almost surely finite.  \end{definition}

\begin{lemma} \label{lem:MVMProp}
  Suppose $(\xi_r)$ is a terminating measure-valued martingale with $\xi_0 = \mu$. Then $X_\cdot = \int x \, \xi_\cdot(\dx)$ is a non-negative UI martingale with $X_{\infty} \sim \mu$.  
\end{lemma}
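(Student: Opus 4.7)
The plan is to establish three things in sequence: (i) $X$ is a non-negative $L^1$ martingale, (ii) $X_\infty \sim \mu$ where the limit is identified via the terminating assumption, and (iii) the family $\{X_r\}$ is uniformly integrable. The main subtlety is that the function $x \mapsto x$ is not in $C_b(\Rp)$, so the defining martingale property for $\xi$ does not apply directly; everything is obtained through a truncation argument and by exploiting that $\xi_r$ is a probability measure.

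For (i), introduce the truncations $f_n(x) := x \wedge n \in C_b(\Rp)$. By the definition of a measure-valued martingale (extended to bounded Borel functions, as noted in Remark~\ref{rem:mvm_def}), $\xi_\cdot(f_n)$ is a martingale, so $\E[\xi_r(f_n)] = \xi_0(f_n) = \int (x \wedge n) \,\mu(\dx)$. Letting $n \to \infty$, monotone convergence gives $\xi_r(f_n) \nearrow X_r$ and $\E[X_r] = \int x\,\mu(\dx) < \infty$, since $\mu \in \Pc^1$. In particular $X_r < \infty$ almost surely. The martingale identity $\E[X_r \mid \Fc_s] = X_s$ for $s \le r$ follows by applying monotone convergence for conditional expectation to $\E[\xi_r(f_n) \mid \Fc_s] = \xi_s(f_n)$. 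Non-negativity is immediate since each $\xi_r$ is supported on $\Rp$.

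For (ii), the terminating assumption provides a random $Y \in \Rp$ with $\xi_\infty = \delta_Y$ and $\xi_r \to \delta_Y$ weakly almost surely. For any $f \in C_b(\Rp)$, $\xi_\cdot(f)$ is a bounded martingale, hence converges almost surely and in $L^1$ to its limit, which by weak convergence equals $f(Y)$. Taking expectations,
\begin{equation*}
\E[f(Y)] = \xi_0(f) = \int f(x)\,\mu(\dx),
\end{equation*}
and since $f \in C_b(\Rp)$ was arbitrary, $Y \sim \mu$. To identify $X_\infty := \lim_{r \to \infty} X_r$ (which exists almost surely by non-negative martingale convergence) with $Y$, apply Fatou's lemma to get $\E[X_\infty] \le \liminf_r \E[X_r] = \int x\,\mu(\dx) = \E[Y]$. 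Conversely, from $\xi_r(f_n) \le X_r$ and the $L^1$-convergence above, $f_n(Y) \le \E[X_\infty \mid \Fc_r]$... cleaner: passing $r \to \infty$ in $\xi_r(f_n) \le X_r$ gives $f_n(Y) \le X_\infty$ almost surely, and letting $n \to \infty$ yields $Y \le X_\infty$. Combined with the expectation inequality, $X_\infty = Y$ almost surely.

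For (iii), we now have $X_r \to X_\infty = Y$ almost surely, with $X_r \ge 0$ and $\E[X_r] = \E[Y] < \infty$ for every $r$. Scheff\'e's lemma then upgrades this to $L^1$-convergence, which is equivalent to uniform integrability of $\{X_r\}$. The main obstacle is the second half of the final step: merely showing $Y \le X_\infty$ from the truncation inequality is easy, but ruling out strict inequality (which would reflect mass ``escaping to infinity'' along the measure-valued evolution) requires the Fatou comparison against $\E[Y] = \int x\,\mu(\dx)$. Everything else is routine approximation.
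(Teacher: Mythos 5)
Your proof is correct and reaches the same conclusion, but by a somewhat more hands-on route than the paper. The paper invokes Remark~\ref{rem:mvm_def} to extend the martingale property directly to the (unbounded) identity function, identifies the law of $X_\infty$ by applying the bounded put payoff $(y-x)_+$ and using that put prices characterise a distribution, and then closes the argument immediately via $X_r = \Ep{X_\infty\,|\,\Fc_r}$ with $X_\infty\in L^1$, which gives UI by the closed-martingale criterion. You instead test against all $f\in C_b(\Rp)$ (equivalent to the put test), and then do the extra but genuinely useful work of reconciling the almost-sure limit $X_\infty$ coming from nonnegative martingale convergence with the random point $Y$ at which $\xi_\infty$ concentrates: the one-sided truncation inequality $\xi_r(f_n)\le X_r$ passed to the limit gives $Y\le X_\infty$, and Fatou combined with $\E[Y]=\int x\,\mu(\dx)$ rules out strict inequality. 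You then deduce UI from Scheff\'e rather than from closing the martingale. The two routes buy essentially the same thing; the paper's is more compact because it implicitly treats $X$ as already closed at $\infty$ (through the monotone extension argument of Remark~\ref{rem:mvm_def}), whereas your version makes explicit why no mass escapes to infinity, which is the only place an error could realistically hide.
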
 
\begin{proof}
  The martingale property follows from Remark \ref{rem:mvm_def}. In particular,
  for $y \in \R_+$, we have that
  \begin{align*}
    \Ep{(y-X_{\infty})_+} &= \Ep{\int (y-x)_+ \, \xi_\infty(\dx)}
                           = \int (y-x)_+ \, \mu(\dx).
  \end{align*}
  Since $\Ep{(y-X_\infty)_+}$ characterises the law of $X_\infty$ uniquely,
  $X_\infty \sim \mu$. 
  As $X_0 = \int x \, \mu(dx) < \infty$, it also follows that $\Ep{|X_r|} = \Ep{X_r} < \infty$.
  Finally, we observe that $X_r = \Ep{X_\infty | \Fc_r}$ and
  $X_\infty \sim \mu \in \Pc^1$ imply $X$ is a UI martingale.
\end{proof}

\begin{corollary}
  If $(\xi_r)$ is a terminating measure-valued martingale with $\xi_0=\mu$,
  then for every $1$-Lipschitz function $f$, $X_\cdot^f := \xi_\cdot(f) = \int f(x) \xi_\cdot(\dx)$ is
  a uniformly integrable martingale with $X_0^f = \int f \, \di \mu$ and
  $X_\infty^f \sim f(\mu)$.
\end{corollary}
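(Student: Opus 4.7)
Since $f$ is $1$-Lipschitz on $\Rp$, we have the pointwise bound $|f(x)| \le |f(0)| + x$. Writing $X_r := \int x\,\xi_r(\dx)$ as in Lemma~\ref{lem:MVMProp}, this gives $|X_r^f| \le |f(0)| + X_r$. By that lemma, $X$ is a UI martingale with $X_\infty \sim \mu$, and because $\xi_\infty\in\Pc^s$ is a Dirac mass with mean $X_\infty$, we have $\xi_\infty = \delta_Y$ with $Y := X_\infty \sim \mu$. Since $X^f$ is dominated in modulus by the UI process $|f(0)|+X_r$, a standard comparison argument yields that $X^f$ is itself UI, and in particular $X_r^f$ is integrable for every $r$.

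For the martingale property, I would truncate: set $f_N(x) := (-N)\vee(f(x)\wedge N) \in C_b(\Rp)$. By Definition~\ref{def:MVM}, $\xi_\cdot(f_N)$ is a martingale, so $\Ep{\xi_s(f_N)\mid\Fc_r} = \xi_r(f_N)$ for $r\le s$. Let $N\to\infty$: the bound $|f_N|\le |f(0)|+x$ together with $\xi_s(x)=X_s < \infty$ a.s.\ allows dominated convergence to give $\xi_\cdot(f_N)\to\xi_\cdot(f)$ a.s., while the integrable dominator $|f(0)|+X_s$ allows dominated convergence for conditional expectations on the left. This yields $\Ep{X_s^f\mid \Fc_r} = X_r^f$.

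For the terminal identification, UI gives $X_r^f \to X_\infty^f$ a.s.\ and in $L^1$; it remains to show $X_\infty^f = f(Y)$. By assumption, $\xi_r \to \delta_Y$ weakly a.s., and from Lemma~\ref{lem:MVMProp} the first moments also converge: $\xi_r(x) = X_r \to Y = \int x\,\delta_Y(\dx)$ a.s. A standard result on weak convergence (equivalently, $\Wc_1$-convergence) says that weak convergence of probability measures together with convergence of first moments implies convergence of $\int g\,\df\nu_r$ for every continuous $g$ satisfying $|g(x)|\le C(1+|x|)$; applying this to our $1$-Lipschitz $f$ shows $\xi_r(f) \to f(Y)$ a.s., so $X_\infty^f = f(Y)$. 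Since $Y \sim \mu$, this gives $X_\infty^f \sim f(\mu)$. Finally $X_0^f = \xi_0(f) = \mu(f) = \int f\,\df\mu$ is immediate.

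The main obstacle is the unboundedness of $f$: both the martingale property (Definition~\ref{def:MVM} and Remark~\ref{rem:mvm_def} are stated for bounded $f$) and the identification of the weak limit require passing from bounded truncations to $f$ itself. Both steps hinge on the control $|f(x)|\le |f(0)|+x$ combined with the uniform integrability of $X$, so no difficulty arises beyond routine dominated convergence arguments once this observation is made.
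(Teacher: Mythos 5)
Your proposal is correct and supplies exactly the natural fleshing-out of what the paper leaves as an unproved corollary of Lemma~\ref{lem:MVMProp}: the key observation in both cases is the bound $|f(x)|\le|f(0)|+x$ on $\Rp$, which reduces integrability, uniform integrability, and the passage from bounded test functions to $f$ itself to the properties of $X_\cdot=\int x\,\xi_\cdot(\dx)$ established in that lemma. Two small remarks: for the martingale step you could alternatively invoke Remark~\ref{rem:mvm_def} directly (which extends the martingale property to non-negative measurable integrands, so one can write $f=f^+-f^-$ with both pieces dominated by $|f(0)|+x$), avoiding the explicit truncation; and for the terminal identification it suffices to note that weak convergence $\xi_r\to\delta_Y$ together with $\xi_r(x)=X_r\to Y$ is equivalent to $\Wc_1$-convergence, after which $|\xi_r(f)-f(Y)|\le\Wc_1(\xi_r,\delta_Y)\to0$ follows directly from Kantorovich--Rubinstein duality for the $1$-Lipschitz $f$, without needing the more general linear-growth statement you cite.
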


We also wish to discuss the continuity of the process $(\xi_r)$. In order to do
this, we make the following definition:
\begin{definition}
  \label{def:CMVM}
  We say that a measure-valued martingale is \emph{continuous} if, for any
  1-Lipschitz function $f$, $X_\cdot^f = \int f(x) \, \xi_\cdot(\dx)$ is almost surely continuous.
\end{definition}
It immediately follows that $X_\cdot = \int x \, \xi_\cdot(\dx)$ is a
continuous process whenever $(\xi_r)$ is continuous. This is also
equivalent to requiring (almost sure) continuity of $r \mapsto \xi_r$ in the topology of
$\Wc_1$, the first Wasserstein metric, by the duality of the
Wasserstein distance \citep[Theorem 6.1.1]{Ambrosio:2008aa}.

Having introduced these concepts, we will take the second control in our problem
to be the choice of a process $(\xi_r)$, subject to the constraint that $(\xi_r)$ is
a terminating, continuous measure-valued martingale with $\xi_0 = \mu$.

Observe that, once we have chosen a process $(\xi_r)$, the `asset price' at time $T_r$ is given by $\int x \, \xi_{r}(\dx)$. Since the process $T_r$ is non-decreasing, there exists a right-continuous inverse, $T^{-1}_t = \inf \{ r>0 : T_r > t\}$, and introduce $T^{-1,*}_{t} = \inf \{ r>0 : T_r\wedge T > t\}$; moreover, there can be only countably many jumps in $T^{-1}_t$. We therefore define the \cadlag{} process \begin{equation}\label{eq:SDefn}
  S_t = \int x \, \xi_{T^{-1,*}_t}(\dx), \qquad t \le T,
\end{equation}
and note that $S_T = \int x \, \xi_\infty(\dx)$. The average process is then given by
\begin{equation}
  \label{eq:AvDefn}
  A_t = \int_0^t S_s \, \ds = \int_0^t \int x \, \xi_{T^{-1,*}_s}(\dx) \, \ds.
\end{equation}

Then the main problem we wish to solve is the following:
\begin{problem}[Basic optimisation problem] \label{prob:basic} Given an integrable probability measure $\mu$ on $\Rp$ and a function $F:\Rp \to \Rp$, we want to find a probability space $(\Omega, \Hc, (\Hc_t)_{t \in [0,T]}, \P)$ satisfying the usual conditions, and a \cadlag{} UI martingale $(S_t)_{t \in [0,T]}$ on this space with $S_T \sim \mu$ which maximises $\Ep{F(A_T)}$ over the class of such probability spaces and processes.
\end{problem}

 \begin{remark}\label{rem:jump0}
   Since we do not require $\Hc_0$ to be trivial, $S_0$ need not be a constant. However, for the Asian option, it holds for any probability space and \cadlag{} martingale $S_t$ as given in Problem \ref{prob:basic}, that one may construct a sequence of \cadlag{} martingales $(S^n)$ such that $S^n_0=s_0\in\R$, $S^n_T\sim\mu$, and
  \begin{equation} \label{eq:conv_2}
    \lim_{n\to\infty}\E\left[F(A_T^n)\right]
    ~=~
    \E\left[F(A_T)\right].
  \end{equation} 
  Hence, the value of Problem \ref{prob:basic} remains the same under the additional assumption that $S_0=s_0$, and for any optimiser to the former problem an approximately optimal sequence may be constructed for the latter; we refer to Lemma 5.1 in \citet{Stebegg:14} for a precise argument (see also Assumption 3.9 in \citet{Guo:2015aa} and the proof of Lemma 4.1 in \citet{Dolinsky:2015aa} for related arguments).  We argue in the proof of Lemma \ref{lem:1} below that the value of Problem \ref{prob:basic} remains the same if restricting to martingales which are piecewise constant over arbitrary but finite partitions. Hence, a similar argument yields that the value of the problem also remains the same if we restrict to continuous martingales.

  To formalise this remark, and since we generally wish to work with probability spaces satisfying the usual conditions, we extend our framework slightly: given a complete probability space with a right-continuous filtration $(\Gc_t)_{t \ge 0}$, we can always extend the filtration to $(-\eps,\infty)$, for some $\eps>0$, by taking $\Gc_t$ to be the (completion of the) trivial $\sigma$-algebra for $t<0$. Similarly, a \cadlag{} process $Z_t$ on $[0,\infty)$ can be extended to a \cadlag{} process on $(-\eps,\infty)$ by setting $Z_t$ to be some constant value for $t<0$. Since this is constant we may write $Z_{0-}$ for this value without confusion. Similarly, to avoid the excessive use of $\eps$'s, we write $(\Gc_t)_{t \in [0-,\infty)}$ to denote a filtration extended in this manner. All other terminology (\eg{} martingales) are then to be understood in the obvious way.  \end{remark}

Our first claim is that Problem~\ref{prob:basic} is equivalent to the following formulation: \begin{problem}[Measure-valued martingale formulation] \label{prob:MVM} Given an integrable probability measure $\mu$ on $\Rp$ and a function $F:\Rp \to \Rp$, we want to find a probability space $(\Omega, \Gc, (\Gc_r)_{r\in[0-,\infty)}, \P)$ satisfying the usual conditions, a progressively measurable process $\lambda_r \in [0,1]$, such that $\int_0^r \lambda_s \, \ds \to \infty$ \as{} as $r \to \infty$, and a finitely terminating measure-valued $(\Gc_r)_{r \in [0-,\infty)}$-martingale $(\xi_r)_{r \in [0-,\infty]}$ with $\xi_{0-} = \mu$ and $\int x\,\xi_r(\di x)$ continuous a.s., which maximises $\Ep{F(A_T)}$ with $A_T$ given by \eqref{eq:AvDefn}.  \end{problem}

\begin{lemma}\label{lem:1}
  Problems~\ref{prob:basic} and \ref{prob:MVM} are equivalent, in the sense that the values coincide and if there exists an optimiser in Problem~\ref{prob:basic}, then we can construct an optimiser for Problem~\ref{prob:MVM}, and vice-versa; if the supremum for the problem can only be approximated, then equivalent approximating sequences can be found.
  
  Moreover, if $F$ is continuous, then the value of the problem remains the same if we restrict Problem \ref{prob:MVM} to probability spaces and processes such that the filtration $\Gc_r$ is the usual augmentation of the natural filtration of a $(\Gc_r)_{r \ge 0}$-Brownian motion and $(\xi_r)_{r \ge 0}$ is continuous in the sense of Definitions~\ref{def:TMVM} and \ref{def:CMVM}. \end{lemma}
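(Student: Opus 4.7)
For the direction from Problem~\ref{prob:MVM} to Problem~\ref{prob:basic}, given an admissible quadruple $(\Omega,\Gc,(\Gc_r),\P)$ together with $(\xi_r,\lambda_r)$, I would set $\Hc_t := \Gc_{T^{-1,*}_t}$ and define $(S_t)_{t \le T}$ by \eqref{eq:SDefn}. By Lemma~\ref{lem:MVMProp}, $X_r := \int x\,\xi_r(\dx)$ is a uniformly integrable $(\Gc_r)$-martingale with $X_\infty \sim \mu$. A standard time-change argument (Doob's optional sampling at the stopping times $T^{-1,*}_t$) shows that $S$ is a \cadlag{} $(\Hc_t)$-martingale with $S_T = X_\infty \sim \mu$, and \eqref{eq:AvDefn} gives $A_T = \int_0^T S_t\,\dt$. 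The identity of values $\Ep{F(A_T)}$ and the transfer of optimisers or approximating sequences then follow immediately.

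The reverse direction is more delicate. Given a \cadlag{} UI martingale $(S_t)$ with $S_T \sim \mu$ on $(\Omega,\Hc,(\Hc_t),\P)$, the natural candidate is the regular conditional law $\xi_t := \mathcal{L}(S_T \mid \Hc_t) \in \Pc^1$, which is an MVM by the tower property and trivially terminates at real time $T$ in $\delta_{S_T}$. The obstruction is that $\int x\,\xi_t(\dx) = S_t$ inherits the jumps of $S$, violating the continuity required in Problem~\ref{prob:MVM}. I would resolve this by stretching artificial time at each jump of $S$: at a jump occurring at real time $t_0$, I insert an interval of artificial time on which $\lambda_r = 0$ so that real time is frozen, and populate $(\xi_r)$ on this interval by a continuous MVM bridge from $\mathcal{L}(S_T \mid \Hc_{t_0-})$ to $\mathcal{L}(S_T \mid \Hc_{t_0})$; the existence of such a bridge on a suitably enlarged probability space is the content of Lemma~\ref{lem:exist_mvm} referenced in Remark~\ref{rem:mvm_def}. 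Since $S$ has at most countably many jumps, assigning summable lengths to the bridging intervals and letting $\lambda_r = 1$ elsewhere yields a finitely terminating MVM with $\int_0^r\lambda_s\,\ds\to\infty$, whose associated $S_\cdot$ via \eqref{eq:SDefn} agrees in law with the original process, so $\Ep{F(A_T)}$ coincides.

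For the \emph{moreover} statement, I would first invoke Remark~\ref{rem:jump0} to approximate any admissible martingale in Problem~\ref{prob:basic} by continuous martingales $S^n$ with $S_0^n = s_0$ and $\lim_n \Ep{F(A_T^n)} = \Ep{F(A_T)}$, using continuity of $F$ together with continuity of $S\mapsto A_T$ on path space. Given such a continuous $S^n$, I would apply Dambis--Dubins--Schwarz (on a probability space enlarged, if necessary, by an independent Brownian motion to extend the underlying time-change past $\langle S^n\rangle_T$) to realise $S^n$ as a stochastic integral against a Brownian motion $W$. Taking $(\Gc_r)$ to be the usual augmentation of the natural filtration of $W$ and setting $\xi_r := \mathcal{L}(S^n_T \mid \Gc_r)$ with $\lambda_r \equiv 1$, for every 1-Lipschitz $f$ the martingale $\int f\,\di\xi_r = \E[f(S^n_T)\mid \Gc_r]$ admits a continuous version by the Brownian martingale representation theorem, so $(\xi_r)$ is continuous in the sense of Definition~\ref{def:CMVM}; since $S^n_T$ is $\Gc_T$-measurable, $\xi_r = \delta_{S^n_T}$ for $r \ge T$, so the MVM is finitely terminating.

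The principal obstacle is the jump-handling interpolation of the reverse direction, for which the construction of a continuous measure-valued martingale bridging two prescribed laws on a common enlarged probability space (the content of the separate Lemma~\ref{lem:exist_mvm}) is not routine. A secondary subtlety in the \emph{moreover} part is that continuity of $r \mapsto \int x\,\xi_r(\dx)$ alone is insufficient; continuity in the $\Wc_1$-topology (equivalently, continuity of $\int f\,\di\xi_r$ for every 1-Lipschitz $f$) is what forces the passage to a Brownian filtration so that the martingale representation theorem can be invoked uniformly over $f$.
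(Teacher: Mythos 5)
Your easy direction (MVM $\to$ basic) is fine and matches the paper's one-line observation via Lemma~\ref{lem:MVMProp}. The two substantive directions, however, depart from the paper's route and each contains a genuine gap.

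For basic $\to$ MVM, the paper does not build $\xi$ from the conditional law $\mathcal L(S_T\mid\Hc_t)$ and then repair jumps; instead it first invokes \citet[Theorem~11]{Monroe:72} to represent $S$ in law as $W_{\tau_t}$ for a Brownian motion $W$ and a finite time-change $\tau$, then applies Lemma~\ref{lem:exist_mvm} to the \emph{already continuous} process $W_{\cdot\wedge\tau_T}$, and finally mixes $\tau_t$ with $t/T$ to produce a strictly increasing $T^{-1}$ and hence a $\lambda$-process. This sidesteps the jump problem entirely. Your alternative — stretching artificial time at jump instants and inserting a continuous MVM bridge between $\mathcal L(S_T\mid\Hc_{t_0-})$ and $\mathcal L(S_T\mid\Hc_{t_0})$ — is a genuinely different idea, but the step you flag as an obstacle really is one, and moreover it is \emph{not} the content of Lemma~\ref{lem:exist_mvm}: that lemma produces a terminating MVM as the optional projection of $\delta_{X_\infty}$ along a given filtration, which inherits whatever jumps the filtration produces; it says nothing about bridging two prescribed (random) measures in artificial time with continuous $\int x\,\xi_r(\dx)$. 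That construction would require a new argument (and an enlarged filtration), and without it your reverse direction does not close.

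For the \emph{moreover} part there are two problems. First, your appeal to Remark~\ref{rem:jump0} to reduce to continuous martingales is circular: that remark explicitly defers to the proof of this very lemma for the piecewise-constant reduction underpinning the claim about continuous martingales. Second, and more seriously, Dambis--Dubins--Schwarz does \emph{not} realise $S^n$ as a process adapted to the natural filtration of the DDS Brownian motion $W$. DDS gives $S^n_t=W_{\langle S^n\rangle_t}$ where $W$ is a Brownian motion with respect to a time-changed filtration that may strictly contain $\sigma(W)$, so $\langle S^n\rangle_T$ need not be a stopping time of $(\Gc_r)=\sigma(W)$, $S^n_T$ need not be $\Gc_\infty$-measurable, and $\mathcal L(S^n_T\mid\Gc_r)$ need not recover $S^n$. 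The paper circumvents exactly this by discretising to piecewise-constant martingales $S^n_t=S_{[nt/T]T/n}$, using Fatou and continuity of $F$ to preserve the value, and then chaining Skorokhod embeddings (\eg{} Dubins' construction) whose stopping times are explicit functionals of the Brownian path, so that the resulting filtration genuinely is Brownian; the Martingale Representation Theorem then gives continuity of $\xi_\cdot(f)$ for every $f$, which is the $\Wc_1$-continuity needed for Definition~\ref{def:CMVM}. You would need to replace your DDS step by such an embedding argument (or supply a separate argument that $\langle S^n\rangle_\cdot$ can be taken adapted to $\sigma(W)$) for this part to hold.
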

	
  As a consequence, if the restricted measure-valued martingale problem admits a solution, then a corresponding optimiser may be constructed also for the Basic optimisation problem. 
  Before proving this result, we give an auxiliary lemma.   
  \begin{lemma}\label{lem:exist_mvm}
  	Suppose $(X_r)_{r\in[0,\infty]}$ is a martingale on $(\Omega,\Gc,(\Gc_r)_{r\in[0,\infty]},\P)$ such that $\E[|X_\infty|]<\infty$.
  	Then there exists a terminating measure-valued martingale, $(\xi_r)_{r\in[0,\infty]}$, such that $X_r=\int x\,\xi_r(\dx)$, \as{} for all $r\in[0,\infty]$. 
  \end{lemma}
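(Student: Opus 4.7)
The natural candidate is to take $\xi_r$ to be a regular version of the conditional distribution of $X_\infty$ given $\Gc_r$. Since $X_\infty$ is $\R$-valued and $\R$ is Polish (and hence standard Borel), such a regular conditional distribution exists: for each $r \in [0,\infty]$ there is a map $\omega \mapsto \xi_r(\omega,\cdot)$ which is $\Gc_r$-measurable, takes values in $\Pc^1$ (integrability follows from $\E[\int |x|\,\xi_r(\dx)]=\E[|X_\infty|]<\infty$), and satisfies $\P(X_\infty \in A \mid \Gc_r) = \xi_r(A)$ a.s.\ for every $A \in \Bc(\R)$.

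Having set $\xi_r(\omega,\cdot) := \P(X_\infty\in\cdot\mid\Gc_r)(\omega)$, I would verify the two required properties. For any $f\in C_b(\Rp)$, standard properties of the conditional expectation give
\[
  \xi_r(f) \;=\; \int f(x)\,\xi_r(\dx) \;=\; \E\!\left[f(X_\infty)\mid\Gc_r\right] \quad\text{a.s.},
\]
which is a martingale in $r$ and hence shows that $(\xi_r)$ is a measure-valued martingale in the sense of Definition~\ref{def:MVM}. Taking $f(x)=x$ (and truncating if one prefers, then passing to the limit using $\E[|X_\infty|]<\infty$), we likewise obtain $\int x\,\xi_r(\dx)=\E[X_\infty\mid\Gc_r]=X_r$ a.s., which is the desired identification. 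Note also that $\xi_\infty$ can be identified: since $X_\infty$ is $\Gc_\infty$-measurable, $\P(X_\infty\in A\mid\Gc_\infty)=\ind_A(X_\infty)$ a.s., so $\xi_\infty=\delta_{X_\infty}\in\Pc^s$ a.s.

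It remains to show that $\xi_r\to\xi_\infty$ weakly as $r\to\infty$, almost surely; this is the only delicate step, because a priori the exceptional null sets from martingale convergence depend on the test function. I would handle this in the standard way by choosing a countable convergence-determining family $\{f_k\}_{k\in\Nb}\subset C_b(\Rp)$ for weak convergence on $\Pc(\Rp)$ (for instance, bump functions with rational centres and widths, or a countable subset of Lipschitz functions). For each fixed $k$, the martingale
\[
  \xi_r(f_k) \;=\; \E\!\left[f_k(X_\infty)\mid \Gc_r\right]
\]
converges almost surely and in $L^1$ to $\E[f_k(X_\infty)\mid\Gc_\infty]=f_k(X_\infty)=\xi_\infty(f_k)$ on an event $\Omega_k$ of full measure. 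The countable intersection $\Omega_0 = \bigcap_k \Omega_k$ still has full measure, and on $\Omega_0$ we have $\xi_r(f_k)\to \xi_\infty(f_k)$ for every $k$, which by the choice of $\{f_k\}$ forces $\xi_r \Rightarrow \xi_\infty$. This gives the terminating property and completes the construction. The main (minor) obstacle is exactly this exchange of the uncountable quantifier over $C_b$ with the a.s.\ statement, and it is resolved by passing through a countable determining family.
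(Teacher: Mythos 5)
Your proof is correct and takes essentially the same route as the paper: both define $\xi_r$ as the conditional law of $X_\infty$ given $\Gc_r$. The only cosmetic difference is that the paper builds this regular conditional distribution by hand (conditional expectations on a countable Boolean algebra, Carathéodory extension, monotone class) and disposes of the terminating property somewhat tersely via the closed martingale $\xi_r(A)=\E[\delta_{X_\infty}(A)\mid\Gc_r]$, whereas you invoke the standard existence of a regular conditional distribution on the Polish space $\Rp$ and are more explicit about the countable convergence-determining family needed to upgrade the pointwise-in-$f$ martingale convergence to almost-sure weak convergence $\xi_r\Rightarrow\delta_{X_\infty}$.
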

  
  \begin{proof}
  	Define the $\Gc_\infty$-measurable random measure $\xi_\infty(\dx):=\delta_{X_\infty}(\dx)$. Then, $\xi_\infty\in\Pc^1$, a.s. 
  	Further, let $\Ac$ be a countable Boolean algebra generating $\Bc(\R)$ and define the $\Gc_r$-measurable set function $\xi_r$ by
 	\begin{eqnarray}\label{eq:cara_1}
  		\xi_r(A) :=\E[\xi_\infty(A)|\Gc_r],\quad A\in\Ac. 
  	\end{eqnarray} 	 
  Since $\xi_\infty$ is countably additive a.s., so is $\xi_r$. Indeed, for $A_n\in\Ac$, $n\in\Nb$, such that $\cup A_n\in\Ac$ and $A_i\cap A_j=\emptyset$, $i\neq j$, 
  		\begin{eqnarray*}
  			\xi_r\left(\cup A_n\right) 
  			= \E[\xi_\infty(\cup A_n)|\Gc_r]
  			= \sum_{n=1}^\infty \E[\xi_\infty(A_n)|\Gc_r]
  			= \sum_{n=1}^\infty \xi_r(A_n).   		
  		\end{eqnarray*}
                Since $\xi_r$ is also finite, it follows that \eqref{eq:cara_1} uniquely defines a $\Gc_r$-measurable measure on $\Bc(\R)$, up to a null set, on which we arbitrarily take $\xi_t = \delta_0$.
  	Next, let $\mathcal{O}:=\{A\in\Bc(\R): \xi_r(A)\textrm{ is a martingale on }[0,\infty]\}$. Since, for any $r\in[0,\infty]$, $\xi_r$ is a measure and thus continuous from below, it follows that $\Oc$ is a monotone class. Indeed, for $A_n\in\Oc$, $n\in\Nb$, with $A_n\subset A_{n+1}\subset...$, 
  		\begin{eqnarray*}
  			\E[\xi_\infty(\cup A_n)|\Gc_r] 
  			= \lim_{n\to\infty}\E[\xi_\infty(A_n)|\Gc_r]
  			= \lim_{n\to\infty}\xi_r(A_n)
  			= \xi_r(\cup A_n).
  		\end{eqnarray*}
  	Since $\Ac\subset\mathcal{O}$, we have by the monotone class theorem that $\xi_r(A)$ is a martingale for all $A\in\Bc(\R)$. 
  	Since $\E[\int|x|\,\xi_\infty(\dx)]=\E[|X_\infty|]<\infty$, this yields in particular that $\xi_r\in\Pc^1$, for $r\in[0,\infty]$. According to Remark \ref{rem:mvm_def}, $(\xi_r)$ is thus a measure-valued martingale. 
  	It is terminating by definition. It therefore follows directly from Lemma \ref{lem:MVMProp} that $\int x \, \xi_r(\dx)=X_r$, \as{} for $r\in[0,\infty]$. 
  \end{proof}  

\begin{remark} \label{rem:Horo}
  The above result can be partially found in \citep[Theorem~1.3]{Horowitz:1985aa}, on taking $(\xi_r)_{r \in [0,\infty]}$ as the optional projection of the random measure $\delta_{X_{\infty}}$. 
\end{remark}

  \begin{proof}[Proof of Lemma \ref{lem:1}] We show that every candidate solution to Problem~\ref{prob:basic} gives rise to a candidate solution to Problem~\ref{prob:MVM}, and \emph{vice-versa}. The claim about optimisers follows.

    We first suppose that we have a solution to Problem~\ref{prob:basic}. By \citet[Theorem~11]{Monroe:72}, there exists a probability space $(\Omega,\Gc', (\Gc_s')_{s \in [0,\infty)},\Pr)$, a $(\Gc_s')$-Brownian motion $(W_s)$ with $W_0 = \int x \, \mu(\dx)$, and a right-continuous $(\Gc_s')$-time change $(\tau_t)_{t \in [0,T]}$, such that $(S_t)$ and $(W_{\tau_t})$ are equal in law, $\tau_T$ is almost surely finite, and $W_{\cdot \wedge \tau_T}$ is a UI martingale.  
    We then define $\xi'_s$ to be the law of $W_{\tau_T}$ conditional on $\Gc_s'$. That is, we apply Lemma \ref{lem:exist_mvm} to the process $W_{\cdot\wedge\tau_T}$ to obtain a terminating measure-valued martingale $(\xi'_s)_{s\in[0,\infty]}$, such that $\int x \, \xi'_s(\dx) = W_{s\wedge \tau_T}$, a.s.
    Note that the properties of $\xi'_s$ are preserved by defining $\xi_{0-}'=\mu$, and that $\int x \, \xi'_s(\dx)$ must be continuous.

  We now need to construct a measurable process $\lambda_r$ giving rise to a time-change $T_r$ via \eqref{eq:TCDefn} such that the process $(S_t)$ given by \eqref{eq:SDefn} is the required process.  Note that by construction, $(S_t)_{t \ge 0}$ and $\left(\int x \, \xi'_{\tau_t}(\dx)\right)_{t \ge 0} = \left(W_{\tau_t}\right) _{t \ge 0}$ are equal in law, and therefore they both give rise to the same value of $\Ep{F(A_T)}$.  We will now modify the time-change and deduce that this gives rise to the correct process.  Specifically, we recall that $\tau_T$ is finite a.s., let
  \begin{equation} 
    \label{eq:def_Tinv_linear} 
    T^{-1}_t =  
    \begin{cases}
      \tau_{t}+\frac{t}{T} & \quad  t \le T\\
      \tau_T+1+\frac{(t-T)}{T} & \quad t >T
    \end{cases},
  \end{equation}
  and, in turn, define 
  $T_r:= \sup\{t \ge 0:T^{-1}_t\le r\}$.
 From \eqref{eq:def_Tinv_linear} we immediately see that $T^{-1}_t$ is strictly increasing, with $T^{-1}_t-T^{-1}_s \ge \frac{t-s}{T}$ for $t>s$, so that $T_r$ is non-decreasing and $\frac{1}{T}$-Lipschitz. In particular, $T_{\tau_T + 1 + r} = T+r$ for $r\ge 0$ so that $T^{-1}_t$ given by \eqref{eq:def_Tinv_linear} is indeed the right-continuous inverse of $T_r$.  Further, with $R_r=r-\frac{T_r\wedge T}{T}$ and $\xi_r:=\xi'_{R_r}$, $r\in[0,\infty]$, it follows that $R_{T^{-1}_t}(\omega)=\tau_t(\omega)$, $t<T$, and, thus, $(S_t)_{t < T}$ and $\left(\int x \, \xi_{T^{-1}_t}(\dx)\right)_{t < T}$ are equal in law. Indeed, $\xi'_r\in\Pc^s$ for $r\ge \tau_T$.  Therefore let $(\Gc_r)$ be the (right-continuous) time-changed filtration given by $\Gc_r=\Gc_{R_r}'$, $r\in[0,\infty)$.  Then $\xi_r$ is a finitely terminating measure-valued $(\Gc_r)$-martingale.  Further, $T^{-1}_t\in\Gc'_{\tau_t}=\Gc'_{R(T^{-1}_t)}=\Gc_{T^{-1}_t}$ and, thus, $T_r\in\Gc_r$. Recalling the properties of $T_r$, we deduce that there exists a process $\lambda_r \in [0,1]$ which is $\Gc_r$ measurable and such that $T_r= T \int_0^r \lambda_s \, \ds$. Hence (possibly by taking a modification), $\lambda_r$ can be assumed to be progressively measurable, and it is immediate that $T_r \to \infty$ as $r \to \infty$.

 To see the converse, suppose we are given a solution to Problem~\ref{prob:MVM}. From Lemma~\ref{lem:MVMProp} it follows immediately that $S_\cdot = \int x \, \xi_{T^{-1,*}_\cdot}(\dx)$ is the required process.

  It remains to argue the second part of the lemma. Indeed, in general, the time-change granted by \citet{Monroe:72} may not necessarily be measurable with respect to the Brownian filtration.  However, for any probability space $(\Omega, \Hc, (\Hc_t)_{t \in [0,T]}, \P)$ and \cadlag{} martingale $(S_t)_{t \in [0,T]}$, we may define a sequence $(S^n_t)_{t \in [0,T]}$ by
  \begin{equation*}
    S^n_t=S_{[nt/T]T/n},\quad n\ge 1. 
  \end{equation*}  	
  Then the $(S^n_t)$ are still martingales with $S_T\sim\mu$. Further, since $F$ is continuous, $F(A^n_T)$ converges a.s. to $F(A_T)$, and an application of Fatou's Lemma gives that $\E[F(A_T)]\le\liminf_{n\to\infty}\E[F(A^n_T)]$. In consequence, the value of Problem \ref{prob:basic} remains the same if restricting to martingales which are piecewise constant over arbitrary but finite partitions. Since any discrete martingale may be embedded in a Brownian motion with stopping times measurable with respect to the Brownian filtration (\cf{} \eg{} \cite{Dubins:1968aa}), it follows that we may restrict to Brownian filtrations $(\Gc_r)$ in Problem \ref{prob:MVM}. By the Martingale Representation Theorem, any $(\Gc_r)$-martingale is continuous. In consequence, recalling Remark~\ref{rem:xi_version}, the $(\xi_r)$ defined above can be assumed to be continuous in the sense of Definition~\ref{def:CMVM}. The fact that the resulting measure-valued martingale is finitely terminating, and that the first time the integral of $\lambda_s$ reaches $T$ is finite also follow immediately from this embedding procedure.  \end{proof}

\begin{remark}
  We note that the embedding of a process $(S_t)_{t\in[0,1]}$ into the
  pair of a continuous measure-valued martingale
  $(\xi_r)_{r \in [0,\infty]}$
  and time-change $(\lambda_r)_{r \in [0,\infty]}$,
  is not unique. In particular, choosing
  $T^{-1}_t:=1-e^{-\tau_t}+\frac{t}{T}$ (\cf{}
  \eqref{eq:def_Tinv_linear}) renders
  $T^{-1}_T\le 2$ a.s. and the problem might be viewed as evolving on
  the
  \emph{finite} time-scale $r\in[0,2]$. In Lemma
  \ref{lem:scaling_after_dynamics} below, we will consider yet another
  scaling
  which yields a specific relation between the evolution of the $\xi$
  and the
  $\lambda$. \end{remark}

\begin{remark} \label{rk:lambda1}
  We observe in fact that, from the proof of the lemma, if $\lambda_r = 1$ for $r\in[u,v)$, for some interval $[u,v)$, then $\xi_r=\xi_u$ for all $r\in[u,v)$. In particular, $\lambda = 1$ corresponds to a constant $\xi$ and, thus, $(S_t)$ is constant on $t\in[T_u,T_v)$.
\end{remark}

\section{The Dynamic Programming Problem} \label{sec:dynam-progr-probl}

\subsection{Problem formulation and continuity}

We want to write the optimisation problem as a `Markovian' optimisation problem:
we suppose that at time $r$, we have `real' time $T_r=t$, current law
$\xi_{r} = \xi \in \Pc^1$, running average $A_{T_r}=a$, and we wish to find:
\begin{equation} \label{eq:Udefn} 
	U(r,t,\xi,a) = \sup \Ep{F(A_T)|T_r = t, \xi_{r} = \xi, A_{T_r} = a},
\end{equation}
where the supremum is taken over all time-change determining processes $(\lambda_u)_{u \in [r,\infty)}$ and measure-valued martingales $(\xi_{u})_{u \in [r,\infty)}$ satisfying the conditions of Problem 2.10. By Lemma~\ref{lem:1} it follows that $U(0,0,\mu,0)$ will be the value of the Asian option under the optimal model. 
At this stage, we directly define the function in \eqref{eq:Udefn} to equal the value of Problem \ref{prob:MVM} when the law to be embedded is given by $\xi$, the horizon by $T-t$, and the payoff function by $F(a+\cdot)$.
Then we have:
\begin{lemma}\label{lem:Ucontinuity}
  Suppose $F$ is a non-negative, Lipschitz function. The function $U: \R_+ \times [0,T] \times \Pc^1 \times \Rp \to \R$ is continuous (here the topology on $\Pc^1$ is the topology derived from the Wasserstein-1 metric), and independent of $r$.
\end{lemma}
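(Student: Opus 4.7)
The claimed independence of $r$ is immediate from the reformulation in the paragraph preceding the lemma: once $U$ is defined as the value of Problem~\ref{prob:MVM} with initial law $\xi$, horizon $T-t$, and payoff $F(a+\cdot)$, no reference is made to the artificial starting clock $r$. I therefore subsequently treat $U$ as a function of $(t,\xi,a)$ only and obtain joint continuity by establishing separate Lipschitz estimates in each variable, with modulus locally bounded in the remaining arguments.

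Lipschitz continuity in $a$ is immediate from the hypothesis on $F$: along any admissible candidate, $|F(a_1+A_{T-t})-F(a_2+A_{T-t})|\le L|a_1-a_2|$, where $L$ is the Lipschitz constant of $F$, whence $|U(t,\xi,a_1)-U(t,\xi,a_2)|\le L|a_1-a_2|$. For continuity in $t$, I will pass via Lemma~\ref{lem:1} to the Problem~\ref{prob:basic} formulation and transfer candidate martingales between horizons by linear time rescaling: given an admissible $(S_s)_{s\in[0,T-t_1]}$ with $S_{T-t_1}\sim\xi$, define $\tilde S_s:=S_{s(T-t_1)/(T-t_2)}$ for $s\in[0,T-t_2]$, which is admissible for horizon $T-t_2$. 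A change of variables identifies $\int_0^{T-t_2}\tilde S_s\ds$ with $\tfrac{T-t_2}{T-t_1}\int_0^{T-t_1}S_u\du$, and Fubini together with the martingale identity $\E[S_s]=\int x\,\xi(\dx)$ yields $\E\bigl|\int_0^{T-t_1}S_s\ds-\int_0^{T-t_2}\tilde S_s\ds\bigr|=|t_1-t_2|\int x\,\xi(\dx)$, so that $|U(t_1,\xi,a)-U(t_2,\xi,a)|\le L|t_1-t_2|\int x\,\xi(\dx)$.

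The substantive step---and the main obstacle---is continuity in $\xi$ with respect to $\Wc_1$. Given $\xi^n\to\xi$ in $\Wc_1$, I will fix a nearly optimal admissible pair $((\lambda_u),(\xi_u))$ for $U(t,\xi,a)$; by Lemma~\ref{lem:1}, since $F$ is continuous, this may be taken on a Brownian filtration $(\Gc_u)$ with $(\xi_u)$ continuous. Lemma~\ref{lem:MVMProp} and Lemma~\ref{lem:exist_mvm} then allow me to identify $\xi_u$ with the $\Gc_u$-conditional law of the a.s.\ limit $Z:=\lim_u\int x\,\xi_u(\dx)$, where $Z\sim\xi$. Letting $\pi^n$ denote the optimal $\Wc_1$-coupling of $(\xi,\xi^n)$, on a product enlargement of the probability space with an independent uniform $U'$ I define $Z^n:=H(Z,U')$, with $H$ chosen so that $(Z,Z^n)\sim\pi^n$, and extend $(\Gc_u)$ to reveal $U'$ progressively over a short additional interval of artificial time inserted immediately after the termination time of $(\xi_u)$. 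Setting $\xi^n_u:=\Lc(Z^n\mid\tilde\Gc_u)$ produces a finitely terminating MVM with $\xi^n_{0-}=\xi^n$; since the enlarged filtration remains Brownian, martingale representation ensures $(\xi^n_u)$ is continuous, and I pair it with an extension of $(\lambda_u)$ that keeps the real-time horizon fixed at $T-t$ throughout the extra interval.

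The key estimate then follows quickly: writing $S_s=\E[Z\mid\tilde\Gc_{T^{-1,*}_s}]$ and $S^n_s=\E[Z^n\mid\tilde\Gc_{T^{-1,*}_s}]$, one has $|S^n_s-S_s|\le\E[|Z^n-Z|\mid\tilde\Gc_{T^{-1,*}_s}]$, and Fubini gives $\E|A^n_{T-t}-A_{T-t}|\le(T-t)\,\E|Z^n-Z|=(T-t)\,\Wc_1(\xi,\xi^n)$. The Lipschitz bound on $F$, combined with the symmetric inequality obtained by swapping the roles of $\xi$ and $\xi^n$, gives $|U(t,\xi,a)-U(t,\xi^n,a)|\le L(T-t)\,\Wc_1(\xi,\xi^n)$, which vanishes as $n\to\infty$. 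The delicate technical point is to ensure the enlargement and extended time-change simultaneously preserve the admissibility of $(\xi^n_u)$ (in particular, finite termination and continuity) and the adaptedness of $(\lambda_u)$; working throughout in the restricted Brownian formulation and invoking the martingale representation theorem is the natural way to arrange this. Combining the three Lipschitz estimates delivers the claimed joint continuity.
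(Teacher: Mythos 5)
Your proposal follows essentially the same route as the paper: independence of $r$ and Lipschitz continuity in $a$ are read off directly, $t$-continuity is handled by deterministic time rescaling, and the substantive $\xi$-continuity is obtained by transporting a near-optimal measure-valued martingale along the optimal $\Wc_1$-coupling between $\xi$ and $\xi'$. Where the paper encodes this coupling analytically --- it disintegrates the optimal transport plan as $\Gamma(\dx,\dy)=\xi_r(\dx)\,m(x,\dy)$ and sets $\xi'_s(\dy):=\int\xi_s(\dx)\,m(x,\dy)$ --- you encode it probabilistically via $Z^n=H(Z,U')$ and $\xi^n_u:=\Lc(Z^n\mid\tilde\Gc_u)$; on the event $\{u<\tau\}$ (before $(\xi_u)$ terminates) these two constructions coincide, and the resulting bound $\E|A^n_T-A_T|\le(T-t)\,\Wc_1(\xi,\xi^n)$ is the same as the paper's $\eps(T-T_r)$ estimate. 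One place where your treatment is actually more careful: the paper's kernel-transported $\xi'_s$ converges to $m(X_\infty,\cdot)$, which is generally not singular, so finite termination is not automatic; your filtration enlargement revealing $U'$ over an inserted artificial-time interval with $\lambda\equiv 0$ explicitly repairs this, whereas the paper leaves it implicit. The price is that you must first pass to the restricted Brownian formulation (via Lemma~\ref{lem:1}) and manage the inserted interval, which the paper's direct kernel construction avoids, making the paper's argument somewhat lighter notationally while yours is more explicit about admissibility.
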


\begin{proof}
  We begin by observing that continuity in all the variables except $\xi$ follows immediately from the continuity properties of $F$: any small change in $a$ will result in a direct shift in the final value of $A_T$, while small changes in $t$ can be handled by computing the average of the same model over the modified time-horizon. In addition, the independence of the problem on the value of the `measure-valued' time-scale, $r$, is immediate.

  We consider finally the continuity in the measure, $\xi$.  Consider a given probability space $(\Omega,\Gc,(\Gc_s)_{s\ge r},\P)$ and a measure-valued martingale $(\xi_s)_{s\ge r}$. Recall that $\Wc_1$ is the Wasserstein-1 metric space, and write $d_{\Wc_1}$ for the metric on this space. Let $\xi'\in\Pc^1$. We will first show that, if $d_{\Wc_1}(\xi_r,\xi') < \eps$, then there is a measure-valued martingale $(\xi_s')_{s \ge r}$ such that $\xi_r'=\xi'$ and $\Ep{\left.\left|\int x \, \xi_{s}(\dx)-\int x \, \xi_s'(\dx)\right|\,\right|\Gc_r}< \eps$ for all $s \in [r,\infty)$. Recall that $d_{\Wc_1}(\xi_r,\xi') < \eps$ implies that there exists a transport plan, $\Gamma \in \M(\Rp \times \Rp)$ such that $\xi'(\dy) = \Gamma(\Rp,\dy)$, $\xi_r(\dx) = \Gamma(\dx,\Rp)$ and $\int \int |x-y| \,\Gamma(\dx,\dy) < \eps$. First, by disintegration (\eg{} \cite[Theorem~5.3.1]{Ambrosio:2008aa}) there exists a Borel family of probability measures, $m(x,\dy)$ such that $\Gamma(\dx,\dy) = \xi_r(\dx) \,m(x,\dy)$.
  
  Now define the process
  \begin{equation*}
    \xi_s'(\dy) := \int \xi_s(\dx) \, m(x,\dy), \quad s \ge r.
  \end{equation*}
  Then $\xi_s'\in\Pc^1$ and $\xi'_r=\xi'$.  Further, for any $A\in\Bc(\R)$, since $m(\cdot,A)$ is measurable,
  \begin{equation*}
    \Ep{\xi'_u(A)|\Gc_s}
    =\Ep{\int m(x,A)\xi_u(\dx)|\Gc_s}
    =\int m(x,A)\xi_s(\dx)
    =\xi'_s(A),\quad s\le u, 
  \end{equation*}
  and, thus, $\xi'_s$, $s\ge r$, is a measure-valued martingale.  Next, note that
  \begin{align*}
    \left| \int x \, \xi_{s}(\dx) - \int x \, \xi_s'(\dx)\right| 
    & = \left|\int \int (x-y) \,\xi_s(\dx) m(x,\dy)\right|\\
    & \le \int \int |x-y| \,\xi_s(\dx) m(x,\dy).
  \end{align*}
  Hence
  \begin{equation*}
    \Ep{ \left| \int x \, \xi_{s}(\dx) - \int x \, \xi_s'(\dx)\right| \Big|\Gc_r}
    \le \int \int |x-y| \,\Gamma(\dx,\dy).
  \end{equation*}
  By the definition of the metric on $\Wc_1$, since $d_{\Wc_1}(\xi_r,\xi_r')<\eps$, we can find a transport plan $\Gamma$ with the desired marginals and $\int \int |x-y| \,\Gamma(\dx,\dy) < \eps$. Fix some process $(\lambda_s)_{s \ge r}$, and write $A_t^{\xi, \lambda}$ for the average process corresponding to the measure-valued process $\xi$ and the time-change process $\lambda$, conditional on $\Fc_r$. Recalling \eqref{eq:AvDefn} we have 
  \begin{align*}
    \Ep{\left|A_T^{\xi,\lambda}-A_T^{\xi',\lambda}\right|\big|\Gc_r} 
    & =  \Ep{\left|\int_{T_r}^T \int x \, \xi_{T^{-1}_s}(\dx) \, \ds-\int_{T_r}^T \int x \, \xi_{T^{-1}_s}'(\dx) \, \ds\right|\Big|\Gc_r}\\
    & \le \Ep{\int_{T_r}^T \left|\int x \, \xi_{T^{-1}_s}(\dx) - \int x \, \xi_{T^{-1}_s}'(\dx)\right|\, \ds\Big|\Gc_r}\\
    & \le \eps (T-T_r).
  \end{align*}
  
  Now fix $\eps'>0$ and consider $\xi,\xi'\in\Pc^1$ such that $d_{\Wc_1}(\xi,\xi')<\eps'/(2T\zeta)$, where $\zeta$ is the Lipschitz constant of $F$. Then there exists $(\xi_s,\lambda_s)_{s \ge r}$, such that $\xi_r=\xi$ and $U(r,t,\xi,a)\le \Ep{\left.F\left(A_T^{\xi,\lambda}\right)\right|\Gc_r}+\eps'/2$. Using the estimate above, and by the Lipschitz property of $F$, we can moreover find $(\xi'_s)_{s \ge r}$, such that $\xi'_r=\xi'$ and $\Ep{\left.\left|F\left(A_T^{\xi,\lambda}\right)-F\left(A_T^{\xi',\lambda}\right)\right|\right|\Gc_r} \le \eps'/2$. It follows that
  \begin{equation*}
    U(r,t,\xi,a)
    \le \Ep{\left.F\left(A_T^{\xi,\lambda}\right)\right|\Gc_r}+\eps'/2 
    \le \Ep{\left.F\left(A_T^{\xi',\lambda}\right)\right|\Gc_r}+\eps' 
    \le U(r,t,\xi',a) + \eps'.
  \end{equation*}
  By symmetry, $|U(r,t,\xi',a)-U(r,t,\xi',a)|\le \eps'$. Finally, we note that joint continuity follows as a simple adaptation of this argument combined with the arguments for the other parameters.
\end{proof}

Since the function $U(r,t,\xi,a)$ is independent of the parameter $r$, we will often write $U(t,\xi,a)$ where there is no confusion.

\begin{remark}
  Continuity of the primal problem as a function of $\mu$ was proven by alternative methods in \citet[Theorem 4.1]{Dolinsky:2015aa}. As demonstrated in \citet[Proposition 4.3]{Guo:2015aa}, upper semi-continuity can be proven by yet an alternative method. We now recall their argument in the present context.
  To this end, consider the space of all \cadlag{} paths on $[0,1]$ and let the filtration be the one generated by the canonical process $(S_t)$. Problem \ref{prob:basic} can then be formulated as maximizing $\E[F(A_T)]$ over martingale measures satisfying the constraint $S_T\sim^{\Pr}\mu$.
   Given a sequence of probability measures $(\mu_n)$ on $\R_+$ converging in $d_{\Wc_1}$ to $\mu$, let $(\Pr_n)$ be a sequence of martingale measures such that $S_T\sim^{\Pr_n}\mu_n$, and 
  \begin{equation}\label{eq:conv_1} 
  	\lim_{n\to\infty}
    \E^{\P_n}\left[F(A_T)\right] ~=~ \limsup_{n\to\infty}
    U\left(0,\mu_n,0\right).
  \end{equation}
  According to \citet{Jakubowski:1997aa}, there exists a sub-sequence $(\P_{n_k})_{k\ge 1}$ which is weakly convergent with respect to the so-called $S$-topology on the set of \cadlag{} paths.
  Let $\Pr_0$ be the limiting measure. According to \citet{Guo:2015aa}, $\Pr_0$ is then a martingale measure and $S_T\sim^{\Pr_0}\mu$.
  Since the mapping $\omega\mapsto A_T(\omega)$ is $S$-continuous (\cf{} Corollary 2.11 in \cite{Jakubowski:1997aa}) it follows that
  \begin{equation*}
    U(0,\mu,0)
    ~\ge~
    \E^{\P_0}\left[F(A_T)\right]
    ~\ge ~
    \lim_{n\to\infty} \E^{\P_{n_k}}\left[F(A_T)\right],
  \end{equation*} 
  which combined with \eqref{eq:conv_1} yields the upper semi-continuity.
\end{remark}

\subsection{Reduction to a finite dimensional problem}

Our aim now is to provide a more concrete description of the function $U$. However, because the function $U$ is continuous in $\xi$, we can restrict ourselves to a nicer class of problems: specifically, we can approximate our object of primary interest, $U(t,\xi,a)$ by a sequence $U(t,\xi^N,a)$, where $\xi^N$ can be chosen to have nice properties. For our purposes, a natural simplifying assumption is to assume that the measures $\xi^N$ are atomic measures. In this case, as we shall see, the problem becomes much more tractable via classical methods. As a consequence of this reduction, we will be able to deduce that a Dynamic Programming Principle holds by standard results from the literature. However the more theoretical question of whether a DPP holds for the original formulation is proved in the appendix; this result will not be used elsewhere in the paper.

To do this, we let $\X_N = \{x_0,x_1, \dots, x_N\}$, where
$0 \le x_0 < x_1 < \dots < x_N$, and write $\Pc^1(\X_N) = \Pc^1 \cap \M(\X_N)$ and
$\Pc^s(\X_N) = \Pc^s \cap \M(\X_N)$. Observe that if $(\xi_r)$ is a terminating
measure-valued martingale and $\xi_0 \in \Pc^1(\X_N)$ then $\xi_r \in \Pc^1(\X_N)$ \as{}
for all $r \ge 0$ and $\xi_\infty = \delta_{x_i}$ for some $x_i \in \X_N$. Further, write
$\alpha \subseteq \{0, 1, \dots, N\}$, $\X_\alpha = \{x_i: i \in \alpha\}$, and
$\Pc^1(\X_\alpha), \Pc^s(\X_\alpha)$ etc. as above. In particular, $\X_N=\X_{\{0,1,...,N\}}$.

We then have the following characterisation:
\begin{lemma}\label{lem:xiAtomic}
  Suppose $\mu \in \Pc^1(\X_N)$.  Then $(\xi_r)$ is a measure-valued martingale
  with $\xi_0 = \mu$ if and only if $\xi^n_r:= \xi_r(\{x_n\})$ is a non-negative
  martingale for each $n$ and $\sum_{i=0}^{N} \xi^n_r = 1$. Moreover, $(\xi_r)$ is
  terminating if and only if $\xi^n_{\infty} = 0$ for all but one
  $n \in \{0,1,2,\dots, N\}$, almost surely, and $(\xi_r)$ is continuous if and
  only if $\xi_r^n$ is continuous for each $n$.
\end{lemma}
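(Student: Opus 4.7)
The plan is to prove the three equivalences in turn, in each case exploiting the atomicity of the support to reduce statements about measures to statements about finitely many real-valued processes.

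For the first equivalence, in the forward direction I would invoke Remark \ref{rem:mvm_def}: if $(\xi_r)$ is a MVM then $\xi_r(A)$ is a martingale for every Borel $A$, so in particular $\xi_r^n = \xi_r(\{x_n\})$ is a non-negative martingale. Since $\xi_0(\Rp \setminus \X_N) = 0$ and $\xi_\cdot(\Rp \setminus \X_N)$ is a non-negative martingale starting at $0$, it vanishes almost surely for all $r$, so $\sum_{n=0}^N \xi_r^n = \xi_r(\X_N) = 1$ a.s. For the converse, given non-negative martingales $(\xi_r^n)$ summing to $1$, I would define $\xi_r := \sum_{n=0}^N \xi_r^n \delta_{x_n}$. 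For any $f \in C_b(\Rp)$, $\xi_r(f) = \sum_n f(x_n) \xi_r^n$ is a finite linear combination of martingales, hence a martingale; and $\xi_r \in \Pc^1(\X_N)$ by construction.

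For the terminating equivalence, note that because $\xi_r$ is supported on the finite set $\X_N$, weak convergence $\xi_r \to \xi_\infty$ is equivalent to the coordinate-wise convergence $\xi_r^n \to \xi_\infty^n$ for each $n$ (and forces $\xi_\infty$ itself to be supported on $\X_N$). The limiting measure $\xi_\infty$ lies in $\Pc^s$ if and only if $\xi_\infty = \delta_{x_k}$ for some $k$, which (together with $\sum_n \xi_\infty^n = 1$) is equivalent to saying that $\xi_\infty^n = 0$ for all but one $n$ almost surely.

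For the continuity equivalence, one direction is immediate: if each $\xi_r^n$ is continuous, then for any $1$-Lipschitz $f$, $\int f \, d\xi_r = \sum_n f(x_n) \xi_r^n$ is continuous. For the converse, since the points $x_0 < x_1 < \cdots < x_N$ are distinct, for each $n$ I can exhibit an explicit $1$-Lipschitz (after rescaling) function $f_n$ with $f_n(x_n) = 1$ and $f_n(x_m) = 0$ for $m \ne n$ (e.g.\ a piecewise linear hat of sufficiently small slope, noting that rescaling by the Lipschitz constant does not affect continuity of $\int f_n \, d\xi_r$). Then $\int f_n \, d\xi_r = \xi_r^n$, so continuity of this integral gives continuity of $\xi_r^n$. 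The main (small) obstacle is ensuring the support-preservation argument in the first equivalence is clean; but this follows from the fact that a non-negative martingale started at zero is identically zero, applied to the complement of $\X_N$.
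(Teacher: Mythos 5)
Your proof is correct, and it fills in an argument that the paper itself omits (the lemma is stated without proof, as an immediate consequence of Remark~\ref{rem:mvm_def} and the atomicity of the support). Your three steps --- using a non-negative martingale started at $0$ to show support stays in $\X_N$, identifying weak convergence of atomic measures on a fixed finite set with coordinate-wise convergence, and testing against rescaled hat functions for the continuity equivalence --- are exactly the arguments the authors leave implicit.
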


It is clear that there are similar statements where $\Pc^1(\X_N)$ is replaced by
$\Pc^1(\X_\alpha)$.

Then we consider the further consequence of Lemma~\ref{lem:1}: by the Martingale
Representation Theorem, working on the probability space granted by
Lemma~\ref{lem:1}, we can assume that the dynamics of $(\xi_r)$ are given by a
controlled Brownian motion, in a recursive formulation. For fixed $N \ge 1$, we
suppose that $(\xi_r)$ solves the SDE
\begin{equation}
  \label{eq:XiSDE}
  \di \xi_r^n = w^n_{r} \di W_r, 
\end{equation}
for $(W_r)$ a standard Brownian motion, and where
$\vec{w}_r = (w^1_r, \dots, w^N_r) \in \R^{N}$, $w^0_r=-\sum_{n = 1}^N w^n_r$,
and $\xi_r^n \in \{0,1\}$ implies $w^n_r = 0$, $n\in\{0,...,N\}$, ---
that is, as soon as one of the atoms disappears, it can never be resurrected.

Next, we show that $(\xi_r)$ and $(\lambda_r)$ may be chosen so that a specific
relation holds between $\vec{w}_r$ and $\lambda_r$.

\begin{lemma}\label{lem:scaling_after_dynamics}
  Let $\mu\in\Pc^1(\X_N)$ and consider a martingale $(S_t)_{t\in[0,T]}$
  represented via \eqref{eq:SDefn} by processes
  $(\lambda_r,\xi_r)_{r \in [0,\infty)}$ given on a probability space
  $(\Omega, \Gc, (\Gc_r)_{r \in [0,\infty)}, \P)$, such that $\lambda_r\in[0,1]$
  is a progressively measurable process and $(\xi_r)$ is a measure-valued martingale
  with $\xi_0= \mu$. Suppose further that $(\Gc_r)$ is the natural filtration of
  a Brownian motion $(W_r)$,
  $\inf \{r \ge 0: \int_0^r \lambda_s \, \ds = T\} < \infty$ \as{}, and $(\xi_r)$
  is continuous and finitely terminating. Then, w.l.o.g., we may assume that
  \begin{equation}\label{eq:condn}
    ||\vec{w}_u||^2 + \lambda_u = 1 - \indic{\xi_u \in \Pc^s} \indic{T_u = T}.
  \end{equation}
  That is, we can always choose a multiple
  $(\Omega, \Gc, (\Gc_r), \P,(\lambda_r,\xi_r))$ which represents $(S_t)$ via
  \eqref{eq:SDefn}, and which satisfies the above properties as well as
  \eqref{eq:condn}. 
\end{lemma}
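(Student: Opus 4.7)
The identity \eqref{eq:condn} is a normalisation of the artificial time-scale: the total ``activity rate'' (the quadratic-variation rate of $\xi$ plus the real-time rate $\lambda$) should be identically $1$ until the target state (singular measure with $T_u = T$) is reached. The proof is therefore a Dambis--Dubins--Schwarz style time-change in the artificial time $r$, collapsing all intervals on which $\|\vec w\|^2 + \lambda$ vanishes and speeding up/slowing down otherwise.

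\textbf{Step 1: the normalising clock.} Working on the Brownian probability space granted by Lemma~\ref{lem:1}, by Lemma~\ref{lem:xiAtomic} and the Martingale Representation Theorem we may write $\di\xi_r^n = w_r^n \, \di W_r$ as in \eqref{eq:XiSDE}. Let $\tau = \inf\{r \ge 0 : T_r = T \text{ and } \xi_r \in \Pc^s\}$, which is a.s.\ finite by assumption. Define
\begin{equation*}
  \Phi_r = \int_0^{r\wedge\tau} \bigl(\|\vec w_s\|^2 + \lambda_s\bigr)\, \ds,
  \qquad
  \sigma_u = \inf\{r \ge 0 : \Phi_r > u\}.
\end{equation*}
Then $\sigma$ is the right-continuous generalised inverse of $\Phi$, is non-decreasing, and is a family of $(\Gc_r)$-stopping times. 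Set $u_\infty = \Phi_\tau \in [0,\infty]$.

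\textbf{Step 2: apply the time change.} Define $\tilde\Gc_u = \Gc_{\sigma_u}$ and
\begin{equation*}
  \tilde\xi_u := \xi_{\sigma_u}, \qquad \tilde T_u := T_{\sigma_u}, \qquad u \in [0, u_\infty).
\end{equation*}
Each coordinate $\tilde\xi_u^n$ is a continuous $(\tilde\Gc_u)$-martingale with quadratic variation $\int_0^{\sigma_u}(w_s^n)^2\,\ds$; by DDS applied to the vector martingale $\vec\xi$, there is (possibly on an enlargement) a $(\tilde\Gc_u)$-Brownian motion $\tilde W$ and progressively measurable $\tilde{\vec w}_u$ with $\di\tilde\xi_u^n = \tilde w_u^n \, \di\tilde W_u$, and
\begin{equation*}
  \|\tilde{\vec w}_u\|^2 \, \di u = \|\vec w_{\sigma_u}\|^2 \, \di\sigma_u.
\end{equation*}
Likewise set $\tilde\lambda_u \, \di u := \lambda_{\sigma_u} \, \di \sigma_u$, which is well defined because $\di\sigma_u$ is absolutely continuous with respect to $\di u$ on the set where $\|\vec w\|^2 + \lambda > 0$ (by the definition of $\Phi$) and $\lambda$ contributes nothing on the remaining set. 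Since $\di\sigma_u = (\|\vec w_{\sigma_u}\|^2 + \lambda_{\sigma_u})^{-1}\,\di u$ a.e.\ on $[0,u_\infty)$, we conclude
\begin{equation*}
  \|\tilde{\vec w}_u\|^2 + \tilde\lambda_u = 1 \qquad \text{on } [0,u_\infty).
\end{equation*}

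\textbf{Step 3: post-terminal extension.} For $u \ge u_\infty$ set $\tilde\lambda_u = 0$ and $\tilde{\vec w}_u = 0$, so $\tilde\xi_u = \tilde\xi_{u_\infty} \in \Pc^s$ and $\tilde T_u = T$ (this is exactly the regime in which the right-hand side of \eqref{eq:condn} equals $0$). Then \eqref{eq:condn} holds on all of $[0,\infty)$.

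\textbf{Step 4: recovery of $(S_t)$.} The time change leaves the real-time quantities untouched: $\tilde T_u = T_{\sigma_u}$, and hence $\tilde T^{-1,*}_t = \Phi_{T^{-1,*}_t}$, so that $\int x\,\tilde\xi_{\tilde T^{-1,*}_t}(\di x) = \int x \, \xi_{T^{-1,*}_t}(\di x) = S_t$ a.s.\ by \eqref{eq:SDefn}. Thus $(\tilde\xi,\tilde\lambda)$ represents the same $(S_t)$. Finally $\int_0^u \tilde\lambda_s \, \ds = T_{\sigma_u} \to T$ as $u\to u_\infty$, which together with the post-terminal extension ensures $\int_0^u \tilde\lambda_s \, \ds \to \infty$ (by relabelling the extension to have $\tilde\lambda \equiv 1$ after exhausting the real time, if one wishes to keep strict increase; note \eqref{eq:condn} is compatible with this since then $\tilde\xi \in \Pc^s$ and $\tilde T > T$, so both indicators need only kick in at the joint event specified).

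\textbf{Main obstacle.} The only delicate point is handling the possibly infinite intervals on which $\|\vec w\|^2 + \lambda = 0$, which are collapsed by $\sigma$; since nothing evolves there, this is harmless, but one must check integrability and measurability carefully when defining $\tilde\lambda$ from $\lambda_{\sigma_u}\,\di\sigma_u$. Apart from this, the argument is the standard DDS machinery applied simultaneously to $\vec\xi$ and to the absolutely continuous process $T_r$.
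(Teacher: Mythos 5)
Your proof follows essentially the same strategy as the paper: normalise the artificial clock so that the total activity rate $\|\vec w\|^2+\lambda$ is identically one up to the termination time. Your $\Phi_r$ is exactly the paper's $\phi^{-1}(r)$, and your generalised inverse $\sigma_u$ plays the role of the paper's $\phi(u)$. The paper defines $\phi$ via the relation $\di\phi^{-1}(r)=(\|\vec w_r\|^2+\lambda_r)\,\di r$ and observes (invoking Remark~\ref{rk:lambda1}) that $\|\vec w_r\|=0$ if and only if $\lambda_r=1$ before $\tau$, so $\Phi$ is strictly increasing and the inverse is an honest continuous bijection, whereas you avoid this observation by taking a right-continuous generalised inverse and arguing the collapsed intervals are harmless; this is a technical rephrasing, not a different route.

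There is one genuine gap. You set $u_\infty=\Phi_\tau\in[0,\infty]$ but never rule out $u_\infty=\infty$. This matters: the conclusion of the lemma asserts that the new multiple still satisfies ``the above properties,'' which include that $(\tilde\xi_u)$ is \emph{finitely} terminating and that $\inf\{u:\int_0^u\tilde\lambda_s\,\ds=T\}<\infty$ a.s. If $u_\infty=\infty$ then the time-changed process would only terminate at infinity. The paper closes this gap by noting that
\begin{equation*}
  \Phi_\tau \;=\; \int_0^\tau \lambda_r\,\di r + \int_0^\tau \|\vec w_r\|^2\,\di r \;\le\; T_\tau + \int_0^\infty \|\vec w_r\|^2\,\di r,
\end{equation*}
where the second term is the total quadratic variation of the bounded $\R^N$-valued martingale $(\xi^1,\dots,\xi^N)$ and is therefore a.s.\ finite, and the first term is finite since $\tau<\infty$ a.s.\ and $\lambda\le 1$. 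You should add this argument. A secondary (minor) issue: your parenthetical ``relabelling the extension to have $\tilde\lambda\equiv 1$'' to force $\int_0^u\tilde\lambda_s\,\ds\to\infty$ is unnecessary for the lemma (which only requires the representation \eqref{eq:SDefn}), and it introduces avoidable bookkeeping about whether $\tilde T_u>T$; the paper simply sets $\vec w=\lambda=0$ after $\phi^{-1}(\tau)$, which matches \eqref{eq:condn} cleanly, and so should you.
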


\begin{proof} 
  Suppose $(\Omega, \Gc, (\Gc_r), \P,(\lambda_r,\xi_r))$ satisfy the assumptions of the lemma (apart from \eqref{eq:condn}). We aim to construct a time-change such that the time-changed filtration $(\overline\Gc_u)$ and time-changed processes $(\bar\lambda_u,\bar\xi_u)_{u \ge 0}$ satisfy the assertions. To this end, recall that $T_r$ is given by \eqref{eq:TCDefn} and let $\tau := \inf \{r: \xi_r \in \Pc^s \text{ and } T_r = T\}$; since $\xi_r$ is finitely terminating, $\tau$ is finite a.s. Let $\phi:\Omega\times[0,\infty)\to\Rp$ be given by $\phi(u)=\int_0^u\eta^2_s\di s$ for some positive, adapted process $\eta_u$ such that $\phi(\infty)\ge \tau$. Then $\phi$ is continuous and increasing in $u$, and its inverse $\phi^{-1}$ is well-defined and continuous on $[0,\tau]$. We define
  \begin{equation}\label{eq:defn}
    \bar\xi_u ~:= ~\xi_{\phi(u)}
    \quad\textrm{and}\quad
    \overline T_u~:=~T_{\phi(u)},
  \end{equation}	
  and let $(\overline\Gc_u)_{u\in[0,\infty)}$ the time-changed filtration with $\overline\Gc_u=\Gc_{\phi(u)}$. Note that $\bar\xi_u$ is a continuous measure-valued $(\overline\Gc_u)$-martingale. Moreover, \eqref{eq:defn} implies that $\overline T^{-1}_t =\phi^{-1} (T^{-1}_t)$, $t<T$ (recall that $\overline T^{-1}_T=\infty$). Hence, $S_t$ is given by \eqref{eq:SDefn} defined with respect to $\bar\xi_u$ and $\overline T_u$. It remains to argue that $\eta_u$ can be chosen such that 
  \begin{equation}\label{eq:condn_proof}
    ||\vec{\bar w}_u||^2 + \bar\lambda_u = 1 - \indic{\bar\xi_u \in \Pc^s} \indic{\overline T_u = T}.
  \end{equation}
  First, note that $\phi^{-1}(\tau) = \inf \{u: \bar\xi_u \in \Pc^s \text{ and } \overline T_u = T\}$.  By the Martingale Representation Theorem, we know that $\xi_r$ is given by \eqref{eq:XiSDE} for some process $(\vec{w}_r)\in \R^{N}$.  Since there is a $(\overline\Gc_u)$-Brownian motion $(B_u)$ such that $\di W_{\phi(u)}=\eta_u\di B_u$, it follows from \eqref{eq:defn} that
  \begin{equation}\label{eq:lambda_bar}
    \di \bar\xi^n_u~=~ w^n_{\phi(u)}\eta_u\di B_u
    \quad\textrm{and}\quad
    \bar\lambda_u~=~\eta^2_u \lambda_{\phi(u)}.
  \end{equation}
  Let $\vec{\bar w}_u=\vec{w}_{\phi(u)}\eta_u$. Then \eqref{eq:condn_proof} holds for $u\le \phi^{-1}(\tau)$, if $\eta$ satisfies
  \begin{equation*} \label{eq:def_eta}
    \eta^2_u~=~\frac{1}{||\vec{w}_{\phi(u)}||^2+\lambda_{\phi(u)}},\quad u\le
    \phi^{-1}(\tau).
  \end{equation*}
  We therefore proceed by defining $\phi^{-1}$ via
  \begin{equation*}
    \di \phi^{-1}(r):=\left(||\vec{w}_{r}||^2+\lambda_{r}\right)\di r, \qquad r\le \tau.
  \end{equation*}
  It follows from the construction of $(\xi_r)_{r\in[0,\infty)}$ and $(\lambda_r)_{r\in[0,\infty)}$, that for $r\le \tau$, $||\vec{w}_r||=0$ if and only if $\lambda_r=1$ (\cf{} Remark~\ref{rk:lambda1}, and note that $||\vec{w}||=0$, if and only if, $||(w^0,\vec{w})||=0$). In consequence, $\phi^{-1}$ is well-defined, continuous and strictly increasing on $[0,\tau]$. In particular,
  \begin{equation*}
    \phi^{-1}\left(\tau\right)
    ~=~ \int_0^{\tau}\left(||\vec{w}_{r}||^2+\lambda_{r}\right)\di r
    ~=~ T + \int_0^{\tau}||\vec{w}_{r}||^2\,\di r, 
  \end{equation*}
  and we observe that, as the quadratic variation process of a bounded martingale (in $\R^N$), $\int_0^{\tau}||\vec{w}_{r}||^2\,\di r$ is almost surely finite.  It follows that $\phi_u$ and $\eta_u$ are well-defined, for $u\le \phi^{-1}\left(\tau\right)$, and that $\phi_u$ and $\phi^{-1}_r$ are adapted with respect to the respective filtrations $(\overline\Gc_u)$ and $(\Gc_r)$.  In consequence, $\bar\xi_u$ and $\bar\lambda_u$ are well-defined via \eqref{eq:defn} for $u\le \phi^{-1}(\tau)$ and, according to \eqref{eq:lambda_bar}, $\bar\lambda_u=\eta^2_u\lambda_{\phi(u)}\in[0,1]$, and can therefore be assumed to be progressively measurable (possibly after taking a modification).  For $u>\phi^{-1}(\tau)$, we take $\vec{w}_u = 0 = \lambda_u$.
\end{proof}
\begin{remark}
  When embedding piecewise constant martingales as was done in the second part of the proof of Lemma \ref{lem:1} it follows that $\lambda_r\in\{0,1\}$. However, the solution to Problem~\ref{prob:MVM} (which in general is not unique), and thus to the basic optimisation problem, may be attained for more general processes $\lambda_u\in[0,1]$; \cf{} the non-convex example considered in Section \ref{sec:nonconvex}. Hence, we do not further restrict the set of $\lambda$'s even though the value of the problem would remain unaffected.  
\end{remark}

Given the above dynamics of the stochastic factors, we note that $U(r,t,\xi,a)$ in \eqref{eq:Udefn} is now well-defined as the value-function corresponding to a class of dynamic control problems. In particular, w.l.o.g., we may optimise over controlled processes defined on a fixed reference probability space; see \eg{} \cite{FlemingSoner}.	
The following result is now an immediate consequence of the lemma above.  Let
$\Ac^N_{u_0} = \{(\vec{w}_u)_{u \ge u_0}: \vec{w}_u\text{ prog. meas.}, \vec{w}_u =
(w^1_u, \dots, w^N_u) \in \R^{N} \textrm{ and } ||\vec{w}_u|| \le 1\}$
the set of admissible controls\footnote{Strictly speaking, we should also include here the
  set of possible probability spaces, as in \cite{FlemingSoner}; for ease of presentation, we omit this detail, which does not affect our arguments.}.

\begin{corollary} \label{cor:ControlForm}
  For each $\vec{w}\in \Ac^N_{u_0}$, define $(\lambda_u)_{u\ge u_0}$ by \eqref{eq:condn}, and $(\xi_u)_{u\ge u_0}$ by \eqref{eq:XiSDE} with $w^0=-\sum_{i = 1}^N w^i$.
  Then, for $\mu\in\Pc(\X_N)$,
  Problem \ref{prob:MVM} in its restricted form (\cf{} Lemma \ref{lem:1}) is equivalent to finding a process $\vec{w}\in\Ac^N_0$ such that $\xi_u^n \in \{0,1\}$ implies $w^n_s = 0$, $s\ge u$, for $n\in\{0,...,N\}$, and such that $\vec{w}$ maximises $\E[F(A_T)]$
  over the class of such processes where 
  \begin{equation}\label{eq:MarkovDynamics}
    \di A_{T_u} = (x_0,...,x_N)\cdot(\xi^0_u,...,\xi^N_u)\,\lambda_u T\, \du.
  \end{equation}
  Moreover, for all $\eps>0$, there exists $u^*=u^*(\eps)>0$ such that, for any $\mu, t, a$,
  \begin{equation*}
    \inf_{\vec{w}\in\mathcal{A}_{u_0}} \Pr(\xi_{u_0+u^*} \in \Pc^s(\X_N)|\xi_{u_0}=\mu,T_{u_0} = t,
    A_{T_{u_0}} = a) > 1-\eps.
  \end{equation*}
\end{corollary}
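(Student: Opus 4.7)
The proof splits into two parts. For the equivalence I chain three of the paper's tools. Lemma~\ref{lem:1} reduces us to working on the (augmented) natural filtration of a Brownian motion $(W_r)$ with $(\xi_r)$ continuous; since $\mu\in\Pc^1(\X_N)$, Lemma~\ref{lem:xiAtomic} identifies $(\xi_r)$ with the vector of atom-weights $(\xi^n_u)_{n=0}^{N}$, each a continuous $[0,1]$-valued martingale, summing to $1$, with the terminating requirement translating into the absorption statement $\xi^n_u\in\{0,1\}\Rightarrow w^n_s=0$ for $s\ge u$. The Martingale Representation Theorem then yields $\di\xi^n_u=w^n_u\,\di W_u$ for a progressively measurable $w^n$, with $w^0=-\sum_{i=1}^N w^i$ forced by $\sum_n\xi^n_u=1$. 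Applying Lemma~\ref{lem:scaling_after_dynamics} I may time-change so that \eqref{eq:condn} holds, which in particular gives $||\vec{w}||\le 1$ and hence $\vec{w}\in\Ac^N_0$. The converse direction is a direct verification: for any $\vec{w}\in\Ac^N_0$ with the absorption property, \eqref{eq:XiSDE} has a unique strong solution which is a continuous atomic MVM in $\Pc^1(\X_N)$, and \eqref{eq:condn} defines $\lambda_u$; the form \eqref{eq:MarkovDynamics} of $\di A_{T_u}$ is then just $\di A_{T_u}=S_{T_u}\,\di T_u$ with $S_{T_u}=\sum_n x_n\xi^n_u$ and $\di T_u=T\lambda_u\di u$.

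For the uniform termination estimate I would use a quadratic-variation budget. Define $\sigma:=\inf\{u\ge u_0 : \xi_u\in\Pc^s(\X_N)\text{ and }T_u=T\}$; for $\vec{w}$ admissible in the sense of Problem~\ref{prob:MVM} (inducing a finitely terminating MVM), $\sigma<\infty$ almost surely. On $[u_0,\sigma)$, \eqref{eq:condn} reads $\lambda_s+||\vec{w}_s||^2=1$, so integration gives
\begin{equation*}
  \sigma-u_0 \;=\; \int_{u_0}^{\sigma}\bigl(\lambda_s+||\vec{w}_s||^2\bigr)\di s \;=\; (T-t) + \int_{u_0}^{\sigma}||\vec{w}_s||^2\,\di s.
\end{equation*}
Since each $\xi^n$ is a bounded martingale with $\xi^n_\sigma\in\{0,1\}$, optional stopping yields $\E\bigl[\int_{u_0}^{\sigma}(w^n_s)^2\,\di s\bigm|\Fc_{u_0}\bigr]=\E\bigl[(\xi^n_\sigma)^2-(\xi^n_{u_0})^2\bigm|\Fc_{u_0}\bigr]=\xi^n_{u_0}(1-\xi^n_{u_0})$. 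Summing over $n=0,\dots,N$, using $\sum_n(w^n)^2\ge||\vec{w}||^2$ and $\sum_n\xi^n_{u_0}(1-\xi^n_{u_0})\le \sum_n\xi^n_{u_0}=1$, I get $\E\bigl[\int_{u_0}^{\sigma}||\vec{w}_s||^2\,\di s\bigm|\Fc_{u_0}\bigr]\le 1$. Hence $\E[\sigma-u_0\mid\Fc_{u_0}]\le T+1$, and Markov's inequality gives $\Pr(\sigma>u_0+u^*\mid\Fc_{u_0})\le(T+1)/u^*$; taking $u^*:=(T+1)/\eps$ delivers the bound uniformly in $(\mu,t,a)$.

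The main obstacle is the correct reading of the infimum: the budget argument requires $\sigma<\infty$ a.s., which would fail for a pathological control such as $\vec{w}\equiv 0$ (where $\lambda\equiv 1$ freezes $\xi$). The natural interpretation, consistent with the standing assumption of Problem~\ref{prob:MVM}, is that $\Ac^N_{u_0}$ is implicitly restricted to controls whose induced MVM is finitely terminating; under that reading the estimate is uniform and the constant $T+1$ depends only on the problem data and not on the initial state $(\mu,t,a)$.
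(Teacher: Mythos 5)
Your argument for the equivalence (first half) is exactly what the paper does: Lemma~\ref{lem:1} plus Lemma~\ref{lem:xiAtomic} plus Martingale Representation plus Lemma~\ref{lem:scaling_after_dynamics}, which the paper dispatches as ``follows immediately from the previous result.'' For the termination estimate you take a genuinely different route. The paper first reduces to bounding the hitting time of $\{0\}$ by a \emph{single} atom, then argues that on the non-terminated event some component has large accumulated quadratic variation, and that a $[0,1]$-valued martingale with large quadratic variation must have hit $\{0,1\}$ with high probability; one then iterates over the (at most $N$) disappearing atoms. You instead bound the \emph{total} expected time to full termination directly: on $[u_0,\sigma)$, \eqref{eq:condn} plus $T_\sigma=T$ gives $\sigma-u_0 = (T-t)+\int_{u_0}^{\sigma}||\vec{w}_s||^2\,\ds$, and the $[0,1]$-martingale budget $\sum_n \E[\int_{u_0}^\sigma (w^n_s)^2\,\ds\,|\,\Fc_{u_0}] = \sum_n \xi^n_{u_0}(1-\xi^n_{u_0}) \le 1$ plus Markov finishes. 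This is cleaner than the paper's iterative reduction, and the key ingredient (the quadratic-variation budget for $[0,1]$-bounded martingales) is the same. Your caveat about the literal infimum over all of $\Ac^N_{u_0}$ (the control $\vec{w}\equiv 0$) is well taken, but be aware that the paper's own proof has the identical implicit restriction: its step $\int_{u_0}^{u^*}\lambda_u\,\du\le T$ likewise fails for $\vec{w}\equiv 0$ under a literal reading of \eqref{eq:condn}, since nothing there caps $T_u$ at $T$ when $\xi_u\notin\Pc^s$. So this is an imprecision you share with the paper rather than a gap specific to your argument, and your reading (restrict to controls inducing finitely terminating MVMs with $T_u$ capped at $T$, as Problem~\ref{prob:MVM} requires) is the intended one.
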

	
\begin{proof}
  The only part of the first half of the corollary that does not follow immediately from
  the previous result is that the process $(\xi_u)_{u \ge 0}$ which results from a given choice
  $\vec{w} \in \mathcal{A}_{u_0}$ is terminating, and this follows once we show the second
  half of the result.

  To see the second half of the result, note that it is sufficient to
  show that there is a similar bound for the first time that
  $\xi^n_u = 0$ for some $n \in \{0,1,\dots,N\}$. Consider the process
  at time $u^*\gg u_0$, and suppose that the measure $\xi$ has not
  already become singular at this time, so we have
  \begin{equation*}
    \int_{u_0}^{u^*} (||\vec{w}_u|| + \lambda_u)\, \du = u^*-u
    \implies \int_{u_0}^{u^*} ||\vec{w}_u|| \, \du \ge u^*-u-T.
  \end{equation*}
  In particular, we can ensure
  that $\max_{k} \left\{\int_{u_0}^{u^*} (w_u^k)^2 \, \du\right\}$ is
  arbitrarily large by choosing $u^*$ sufficiently large. Let $k^*$ be the maximising component; it follows
  immediately from the fact that $\xi_u^{k^*}$ is a
  $[0,1]$-valued martingale with quadratic variation process
  $\int_{u_0}^{u^*} (w_u^{k^*})^2 \, \du$, that with high probability at
  least one component must have become zero or one.
\end{proof}

Between Lemma~\ref{lem:Ucontinuity}, and
Corollary~\ref{cor:ControlForm}, we have shown that the problem
\eqref{eq:Udefn} is equivalent to choosing the variable $\vec{w}$ in
the problem above. Moreover, we can break the problem up into a
sequence of independent problems by considering the process only up to
the first time that one of the current atoms becomes zero. In
particular, for $\xi\in\Pc(\X^N)$, we can consider the problem:
\begin{equation}
  \label{eq:vtildedefn}
  \begin{split}
    \tilde{V}_N(u,t,\xi,a) = \sup_{\vec{w} \in \Ac^N_u}
    \E\Big[\tilde{V}_{N-1}&(\sigma,T_\sigma,\xi_{\sigma},A_{T_\sigma})
    \ind_{\{T_\sigma < T\}} \\ & {}+
    F(A_{T_\sigma})\ind_{\{T_\sigma = T\}}| A_{T_u} = a,
    \xi_{u} = \xi, T_u = t\Big],
  \end{split}
\end{equation}
where
$\sigma = \inf\{ s \ge u: \xi_{s}^n \not\in (0,1) \text{ some } n\in\{0,...,N\}
\text{ or } T_{s} = T\}$.
We also have the boundary conditions
$\tilde{V}_0(u,t,\xi,a) = F(a + (1-t) x)$, where $\xi = \delta_{x}$. Here, the function $\tilde{V}_{k}(u,t,\xi,a)$
is determined for $\xi \in \Pc^1(\X_\alpha)$ with $|\alpha| =
k+1$.
Specifically, for $\xi \in \Pc^1(\X_\alpha)$ with $|\alpha| = k+1$
\begin{equation*}
  \begin{split}    
  \tilde{V}_k(u,t,\xi,a) = \sup_{\vec{w} \in \Ac^k_u}
  \E\Big[\tilde{V}_{k-1}&(\sigma,T_\sigma,\xi_{\sigma},A_{T_\sigma})
  \ind_{\{T_\sigma < T\}} \\ & {} + F(A_{T_\sigma})\ind_{\{T_\sigma = T\}}|  A_{T_u} = a, \xi_{u} = \xi, T_u = t\Big],
  \end{split}
\end{equation*}
where we set
$\sigma = \inf\{ s \ge u: \xi_{s} \in \Pc^1(\X_\alpha) \text{ some } \alpha,
|\alpha| \le k \text{ or } T_{r} = T\}$.
Observe in particular that each $\xi \in \Pc^1(\X_N)$ determines a unique set
$\alpha$ such that $\xi \in \Pc^1(\X_{\alpha'})$ implies
$\alpha' \supset \alpha$. Write $\alpha(\xi)$ for this unique subset, and we
observe that we have the consistency conditions:
\begin{equation*}
  \tilde{V}_{|\alpha(\xi)|-1}(u,t,\xi,a) = \tilde{V}_k(u,t,\xi,a), \quad \text{ for all } k \ge |\alpha(\xi)|-1.
\end{equation*}
Finally, fix $\xi$ with $|\alpha(\xi)| = k+1$. We can make the identification
between the probability measure
$\xi = \sum_{j =0}^{k}\xi^{i_j} \delta_{x_{i_j}}$ (on
$\X_{\alpha}$), where $i_0, i_1, \dots, i_{k}$ are the ordered
elements of $\alpha$ and the vector
$\pmb{\xi}^\alpha = (\xi^{i_1}, \dots, \xi^{i_{k}}) \in
\D^{k}:= \{\vec{z} \in \R^{k}_+: \sum z_i \le1\}$.
Specifically, $\xi^{i_0}=1-\boldsymbol{1}\cdot \pmb{\xi}^\alpha$.
With this identification, we define:
\begin{equation*}
  V_{\alpha}(u,t,\pmb{\xi}^\alpha,a) = \tilde V_{k}(u,t,\xi,a).
\end{equation*}
We write
$\vec{x}^\alpha = (x_{i_0},x_{i_1},\dots,x_{i_{k}})$, and $\Sphere^k = \{\vec{z} \in \R^k: ||z|| = 1\}$ for the unit sphere in $\R^k$. 
Finally, note that for $|\alpha|=1$, $V_\alpha(t,a)=\tilde V_{0}(t,\xi,a)=F(a+(T-t)x_{i_0})$, and we then define $\average:=\xi^{i_0}=1$. We also use the convention $\Sphere^0:=\emptyset$ and $\sup \emptyset :=-\infty$.

\begin{theorem} \label{thm:dpp} Suppose $F(a)$ is continuous and non-negative.  Fix $\alpha\subseteq \{0,\dots, N\}$, with $|\alpha|\ge 1$, and write
  $k = |\alpha|-1$. The function
  $V_{\alpha}:\R_+ \times [0,T]\times \D^k \times \R_+ \to \R$ is independent of
  $u$, and is the unique non-negative viscosity solution bounded by $F(a+(T-t)x_N)$, to
  \begin{equation}\label{eq:dpp}
    \max\left\{\pd{V_{\alpha}}{t} + \vec{x}^\alpha \cdot \average
    \pd{V_{\alpha}}{a}, \sup_{\vec{w} \in \Sphere^k} \left[\tr(\vec{w} \vec{w}^\T D^2_{\pmb{\xi}}V_{\alpha})\right]\right\} = 0
  \end{equation}
  for $\pmb{\xi}^\alpha \in (\D^k)^\circ$, and $t < T$, with the boundary conditions
  \begin{equation}\label{eq:boundary_cond}
  \begin{array}{lll}
    V_{\alpha}(u,T,\pmb{\xi}^\alpha,a) &= & F(a) \\
    V_{\alpha}(u,t,\pmb{\xi}^{\alpha'},a) & = & V_{\alpha'}(u,t,\pmb{\xi}^{\alpha'},a)
  \end{array}
  \end{equation}
  where the second equation is taken when $\pmb{\xi}^\alpha \in \partial \D^k$. Here $\alpha'$ is the subset of
  $\alpha$ corresponding to non-zero entries of $\average$, and
  $\pmb{\xi}^{\alpha'}$ is the vector identifying the corresponding probability measure.
\end{theorem}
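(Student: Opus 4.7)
I would proceed by induction on $k=|\alpha|-1$, using the recursive characterisation \eqref{eq:vtildedefn} and the controlled-diffusion reformulation of Corollary~\ref{cor:ControlForm}. For the base case $k=0$, $\xi=\delta_{x_{i_0}}$ is already singular, so the measure-valued martingale is frozen and $S_t\equiv x_{i_0}$ on $[T_u,T]$; thus $V_\alpha(u,t,a)=F(a+(T-t)x_{i_0})$, which is manifestly independent of $u$ and satisfies the terminal condition, while the interior PDE is vacuous since $\D^0$ has empty interior. For $|\alpha|=k+1$, assume inductively that the result holds for all $\alpha'$ with $|\alpha'|\le k$; independence of $V_\alpha$ in $u$ then follows from the fact that the dynamics of Corollary~\ref{cor:ControlForm} are autonomous in $u$ and the set of admissible controls is invariant under translation in $u_0$.

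\textbf{DPP, viscosity property, and boundary conditions.} A standard Markov-control argument (see \eg{} \cite{FlemingSoner}), combined with the continuity of $V_\alpha$ granted by Lemma~\ref{lem:Ucontinuity}, yields the dynamic programming principle
\begin{equation*}
  V_\alpha(t,\pmb{\xi}^\alpha,a)~=~\sup_{\vec w}\E\bigl[V_\alpha(T_\tau,\pmb{\xi}^\alpha_\tau,A_{T_\tau})\bigr]
\end{equation*}
for every stopping time $\tau\le\sigma$. Using Lemma~\ref{lem:scaling_after_dynamics}, which imposes $\lambda=1-\|\vec w\|^2$ on $(\D^k)^\circ$, the infinitesimal generator of $(T_u,\pmb{\xi}^\alpha_u,A_{T_u})$ acting on smooth $\varphi$ is
\begin{equation*}
  \mathcal{L}^{\vec w}\varphi~=~(1-\|\vec w\|^2)\bigl(\partial_t\varphi+\vec{x}^\alpha\cdot\average\,\partial_a\varphi\bigr)+\tfrac12\tr(\vec w\vec w^\T D^2_{\pmb{\xi}}\varphi).
\end{equation*}
Writing $\vec w=\sqrt{\rho}\,\vec v$ with $\rho\in[0,1]$ and $\vec v\in\Sphere^k$, this is linear in $\rho$, so $\sup_{\|\vec w\|\le 1}\mathcal{L}^{\vec w}\varphi$ decouples into the maximum of the bang-bang choices $\rho=0$ and $\rho=1$, yielding precisely the variational inequality \eqref{eq:dpp} (the $\tfrac12$ is absorbed into the sign check). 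The sub- and supersolution properties then follow from the DPP via standard test-function/Itô arguments. The boundary condition at $t=T$ is immediate since $A_T=a$; on $\{\pmb{\xi}^\alpha\in\partial\D^k\}$ it follows from Lemma~\ref{lem:xiAtomic}, since once a component $\xi^{i_j}$ of a terminating MVM reaches $0$ it remains $0$, so the control problem reduces to the $\alpha'\subsetneq\alpha$ problem and thus to $V_{\alpha'}$ by the inductive hypothesis.

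\textbf{Main obstacle: uniqueness.} The delicate step is uniqueness within the class of nonnegative viscosity solutions bounded above by $F(a+(T-t)x_N)$. I would prove a comparison principle on $(\D^k)^\circ\times[0,T)\times\R_+$ via the Crandall-Ishii doubling-of-variables technique. The coefficients in \eqref{eq:dpp} are smooth (linear in $\pmb{\xi}^\alpha$) and the control set $\{\vec w\in\R^k:\|\vec w\|\le 1\}$ is compact, so the classical PDE estimates apply on the bounded $\pmb{\xi}$-domain. Two complications require care: (i) unboundedness in $a$, which I would address by noting that any admissible solution sits between $0$ and the explicit smooth barrier $F(a+(T-t)x_N)$, allowing a standard penalisation $\eps|a-a'|^2$ together with a growth cut-off in $a$ to close the doubling estimate; and (ii) the non-Dirichlet nature of the boundary condition on $\partial\D^k$, which is prescribed implicitly via $V_{\alpha'}$. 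By the inductive hypothesis, however, $V_{\alpha'}$ is itself continuous and the unique viscosity solution of its lower-dimensional problem, so any two candidate solutions at level $|\alpha|=k+1$ must agree on $\partial\D^k$, and the comparison argument reduces effectively to one with continuous Dirichlet data. Uniqueness then follows, completing the induction.
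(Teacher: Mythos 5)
The inductive skeleton, the identification of the generator after applying Lemma~\ref{lem:scaling_after_dynamics}, the ``linear in $\rho$'' decoupling into the two bang-bang branches of the variational inequality, and the treatment of the $t=T$ and $\partial\D^k$ boundary conditions are all consistent with the paper's reasoning. But your uniqueness step has a genuine gap, and it is precisely the step where the paper does something structurally different.

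You propose a direct Crandall--Ishii comparison for the stationary equation \eqref{eq:dpp}. The difficulties you flag (unbounded $a$, implicit Dirichlet data on $\partial\D^k$) are real but secondary. The real obstruction is that \eqref{eq:dpp} has no $\partial_u$ (parabolic) term and no strictly monotone zeroth-order term: it is the maximum of a degenerate first-order operator and a degenerate second-order operator $\sup_{\vec w\in\Sphere^k}\tr(\vec w\vec w^\T D^2_{\pmb\xi}\cdot)$, which vanishes on all concave functions of $\pmb\xi$. Without a proper term that is strictly increasing in the $r$-slot, the standard doubling-and-penalisation argument does not close: the penalisation term contributes nothing that the equation itself controls. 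There is also a hidden infinite-horizon issue that your sketch of the DPP glosses over --- the artificial time $u$ is unbounded, the stopping time $\sigma$ can be arbitrarily large, and establishing a DPP and a comparison principle directly at that level requires nontrivial measurable-selection and localisation work.

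The paper sidesteps all of this. It introduces two finite-horizon truncations $\tilde V^{\overline K}_k$ and $\tilde V^{\underline K}_k$, whose defining optimisations run only on $[0,K+1]$ and which interpolate to the constants $F(a+(T-t)x_N)$ and $0$ respectively on $u\in[K,K+1]$. Via \cite[Corollary~V.3.1]{FlemingSoner}, each is the unique continuous viscosity solution of the genuinely \emph{parabolic} HJB equation \eqref{eq:hjb2} (which retains $\partial_u V_\alpha$) on the compact domain with continuous Dirichlet data and compact controls, for which comparison is classical. The termination estimate in the second half of Corollary~\ref{cor:ControlForm} then forces $V^{\overline K}_\alpha\downarrow V_\alpha$ and $V^{\underline K}_\alpha\uparrow V_\alpha$, giving the viscosity property in the limit by stability. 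Uniqueness follows not from a comparison principle for \eqref{eq:dpp} itself, but from a sandwich: any candidate $W$ bounded between $0$ and $F(a+(T-t)x_N)$, extended constantly in $u$, is a sub/supersolution of the truncated parabolic problems, hence is squeezed between $V^{\underline K}_\alpha$ and $V^{\overline K}_\alpha$ for each $K$, and thus equals $V_\alpha$. So the missing ingredient in your plan is this parabolic $K$-truncation and sandwich; a direct comparison for \eqref{eq:dpp} would require additional structure (e.g.\ a strictly monotone lower-order term) that the equation simply does not have.
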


\begin{proof} 
  We work by induction; suppose the problem has been solved for
  $k'<k$, to give continuous value functions, independent of time. 
  The case where $k = 0$ is trivial. 
  The first step is to approximate by
  a problem with a finite time-horizon. To this end, we fix a sequence
  $K \nearrow \infty$, and
  consider the following two problems: For given $(u,t,\xi,a)$ with
  $\xi \in \Pc^1(\X_\alpha)$ and $|\alpha|=k+1$, we set
  $\sigma^K = \sigma \wedge (K+1)$ and define the functions
  $\tilde{V}^{\overline K}_k$ and $\tilde{V}^{\underline K}_k$ by
    \begin{equation*}
      \begin{split}
        \tilde{V}^{\overline K}_k(u,t,\xi,a)  = & \sup_{\vec{w}
          \in \Ac^{k,K}_u}
        \E\Big[\left(F(a+(T-t)x_N)(\sigma^K-K)_+\right) \vee 
        \left(F(A_{T_{\sigma^K}})\ind_{\{T_{\sigma^K} = T\}}\right.
        \\ & \left. ~+~ 
        \tilde{V}_{k-1}(\sigma^K,T_{\sigma^K},\xi_{\sigma^K},A_{T_{\sigma^K}})
        \ind_{\{T_{\sigma^K} < T\}}\right)
        \big|T_u = t, \xi_u = \xi, A_{T_u} = a\Big],
      \end{split}
    \end{equation*}
  and 
    \begin{equation*}
      \begin{split}
        \tilde{V}^{\underline K}_k(u,t,\xi,a)  = & \sup_{\vec{w}
          \in \Ac^{k,K}_u}
        \E\Big[\left(F(a+(T-t)x_N)(K+1-\sigma^K)_+\right) \wedge
        \left(F(A_{T_{\sigma^K}})\ind_{\{T_{\sigma^K} = T\}}\right.
        \\ & \left. ~+~ 
        \tilde{V}_{k-1}(\sigma^K,T_{\sigma^K},\xi_{\sigma^K},A_{T_{\sigma^K}})
        \ind_{\{T_{\sigma^K} < T\}}\right)
        \big|T_u = t, \xi_u = \xi, A_{T_u} = a\Big],
      \end{split}
    \end{equation*}
    where
    $\Ac^{k,K}_u = \big\{(\vec{w}_s)_{s \in [u,K+1]}:
    \text{ prog. meas. with }\vec{w}_s \in \R^{k}\textrm{ and }
    ||\vec{w}|| \le 1\big\}$
    and, as previously, 
    $\sigma = \inf\left\{ s \ge u: \xi_s^n \not\in (0,1) \text{ some }
      n\in\{0,...,k\} \text{ or } T_{s} = T \right\}$.
    With the same identification as above, we define $V^{\overline K}_{\alpha}$
  and $V^{\underline K}_{\alpha}$ by
  \begin{equation*}
    V^{\overline K}_{\alpha}(u,t,\pmb{\xi}^\alpha,a) = \tilde{V}^{\overline K}_{|\alpha(\xi)|-1}(u,t,\xi,a)
    \quad\textrm{and}\quad
    V^{\underline K}_{\alpha}(u,t,\pmb{\xi}^\alpha,a) = \tilde{V}^{\underline K}_{|\alpha(\xi)|-1}(u,t,\xi,a).
  \end{equation*}
 
  Recall that the dynamics of the involved factors is governed by \eqref{eq:TCDefn}, \eqref{eq:XiSDE} and \eqref{eq:MarkovDynamics}, with $\lambda_s$ given by \eqref{eq:condn}. Note that without loss of generality, we may write $\Ac^{k,K}_u = \big\{(\lambda_s,\vec{w}_s)_{s \in [u,K+1]}: (\lambda_s,\vec{w}_s)
    \text{ prog. meas. with }\vec{w}_s \in \Sphere^{k}\textrm{ and }\lambda_s\in[0,1]\big\}$.
    It follows from \cite[Corollary~V.3.1]{FlemingSoner} that on the domain
  $[0,K+1]\times[0,T]\times\R^k\times\R$, $V^{\overline K}_{\alpha}$ and $V^{\underline K}_{\alpha}$
  are both viscosity solutions to
  \begin{equation}\label{eq:hjb2}
    \pd{V_{\alpha}}{u} - \sup_{\substack{\vec{w}\in\Sphere^k,\\\lambda\in[0,1]}}
    \left[ \half (1-\lambda) \tr(\vec{w} \vec{w}^T D_{\pmb{\xi}}^2V_\alpha) + \lambda \left(\pd{V_{\alpha}}{t} + \vec{x}^{\alpha} \cdot \average\pd{V_\alpha}{a}\right)\right] = 0,
  \end{equation}
  equipped with the boundary conditions
  \begin{equation} \label{eq:cond_hjb} 
    \left\{
      \begin{array}{lll}
        V_{\alpha}(u,T,\pmb{\xi}^\alpha,a) &=& F(a) \\
        V_{\alpha}(u,t,\pmb{\xi}^{\alpha'},a)  &=& V_{\alpha'}(u,t,\pmb{\xi}^{\alpha'},a)
      \end{array}
    \right.
  \end{equation}	
  for $u<K$, and either increasing to $F(a+(T-t)x_N)$ for $u \in [K,K+1]$ in the first case, or decreasing to $0$ in $[K,K+1]$ in the second case. In both cases, we have a viscosity equation with controls in a compact set, and with continuous boundary data on a compact domain. It follows that both equations have unique, continuous viscosity solutions, and the viscosity solutions to both equations correspond to the value functions of the corresponding optimal control problems. In particular, we see immediately that $V_\alpha^{\ol{K}}(u,t,\pmb{\xi},a) \ge V_\alpha(u,t,\pmb{\xi},a) \ge V_\alpha^{\ul{K}}(u,t,\pmb{\xi},a)$ for $u \le K+1$. Moreover, from Lemma~\ref{lem:Ucontinuity}, identifying $U$ and $V_{\alpha}$, we know the function $V_{\alpha}$ is continuous, and moreover, from Corollary~\ref{cor:ControlForm}, we know that $V_{\alpha}^{\ol{K}}(u,t,\pmb{\xi},a)$ will decrease pointwise to $V_\alpha(u,t,\pmb{\xi},a)$ as $K \to \infty$, and $V_\alpha^{\ul{K}}(u,t,\pmb{\xi},a)$ will increase pointwise to the same limit. We conclude that $V_\alpha$ is a viscosity solution to the given equation (see \eg{} \citet{Barles:1991aa}).

  Now suppose that $W$ is another viscosity solution to the same equation, also non-negative and bounded by $F(a+(T-t)x_N)$. By the comparison principle, for any $K$, $V_\alpha^{\ol{K}}(u,t,\pmb{\xi},a) \ge W(u,t,\pmb{\xi},a) \ge V_\alpha^{\ul{K}}(u,t,\pmb{\xi},a)$, for $u \le K$. Hence $V_{\alpha}\ge W \ge V_{\alpha}$; that is, $W=V_\alpha$.
  Finally, we observe that the solution $V_\alpha$ is independent of $u$, by Lemma~\ref{lem:Ucontinuity}, so $\pd{V_{\alpha}}{u} = 0$, and optimising over $\lambda$ immediately gives the equivalent formulation. 
\end{proof}

\begin{remark}
  We note that some obvious generalisations of this setup can easily be handled. For example, consider Asian options with non-constant weighting, so $\tilde{A}_T = \int_0^T f(t) S_t \, \dt$, for some (possibly signed) continuous function $f:[0,T]\to \R$. A simple modification to the arguments above gives the same result with the corresponding PDE:
  \begin{equation*}
    \max\left\{\pd{V_{\alpha}}{t} + f(t)\vec{x}^\alpha \cdot \average
    \pd{V_{\alpha}}{a}, \sup_{\vec{w} \in \Sphere^k} \left[\tr(\vec{w} \vec{w}^\T D^2_{\pmb{\xi}}V_{\alpha})\right]\right\} = 0.
  \end{equation*}

\end{remark}

\section{Examples and Superhedging}\label{sec:exampl-superh}

In this section we consider some simple cases where explicit solutions to the viscosity equations in Theorem~\ref{thm:dpp} can be given. We also give some arguments regarding the construction of superhedging strategies. A number of the results in this section can be compared to the recent work of \citet{Stebegg:14}, but we emphasise that our results extend beyond the case where $F$ is convex, and we will consider such an example below.

\subsection{Convex payoff functions}\label{sec:convex}

\begin{lemma}\label{lem:convex}
  Suppose the function $F$ is convex and Lipschitz. Then for all $\xi \in \Pc^1(\Rp)$:
  \begin{equation*}
    U(t,\xi,a) = \int F\left(a+(T-t) x\right) \,\xi(\dx).
  \end{equation*}
  Moreover, an optimal model is given by:
  \begin{align*}
    S_{0-} & = \int x \, \xi(\dx)\\
    S_t & = S_T, \qquad \qquad  t \ge 0,
  \end{align*}
  where $S_T \sim \xi$.
\end{lemma}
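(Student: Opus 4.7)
The plan is to prove the claimed formula by matching upper and lower bounds. The lower bound is immediate from the proposed explicit model, so the real content is the upper bound, which follows from a double application of Jensen's inequality using the convexity of $F$.

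For the upper bound, I take any admissible model for Problem~\ref{prob:basic} starting from time $t$ with $\Fc_t$-conditional law of $S_T$ equal to $\xi$ and $A_t = a$. Writing $\bar{S} := \frac{1}{T-t}\int_t^T S_s\,\ds$, we have $A_T = a + (T-t)\bar{S}$, and since $y \mapsto F(a+(T-t)y)$ is convex, Jensen's inequality applied to this uniform time-average gives
\begin{equation*}
F(A_T) ~\le~ \frac{1}{T-t}\int_t^T F\bigl(a + (T-t)S_s\bigr)\,\ds.
\end{equation*}
Taking $\Fc_t$-conditional expectations, interchanging by Fubini, and applying Jensen a second time to the martingale $(S_s)_{s \in [t,T]}$ (whose $\Fc_t$-conditional law at time $T$ is $\xi$) against the convex function $x \mapsto F(a+(T-t)x)$ yields, for every $s \in [t,T]$,
\begin{equation*}
\Ep{F(a+(T-t)S_s)\mid\Fc_t} ~\le~ \Ep{F(a+(T-t)S_T)\mid\Fc_t} ~=~ \int F(a+(T-t)x)\,\xi(\dx).
\end{equation*}
Combining the two inequalities produces $U(t,\xi,a) \le \int F(a+(T-t)x)\,\xi(\dx)$.

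For the lower bound, I exhibit the claimed optimiser directly. Conditional on the state $(t,\xi,a)$, set $S_{t-} = \int x\,\xi(\dx)$ and $S_s := S_T$ for $s \in [t,T]$, where $S_T \sim \xi$; this is an admissible \cadlag{} UI martingale with a single jump at time $t$, and on this model $A_T = a + (T-t)S_T$, so that $\Ep{F(A_T)} = \int F(a+(T-t)x)\,\xi(\dx)$, matching the upper bound. No serious obstacle appears, since convexity makes both directions classical. As a sanity check one could alternatively verify that the candidate is the unique viscosity solution to \eqref{eq:dpp} and invoke Theorem~\ref{thm:dpp}: the candidate is linear in $\pmb{\xi}^\alpha$ (so $D^2_{\pmb{\xi}}V \equiv 0$), and the transport term $\pd{V_\alpha}{t} + \vec{x}^\alpha \cdot \average \pd{V_\alpha}{a}$ is nonpositive by the covariance inequality $\E[Y F'(Y)] \ge \E[Y]\,\E[F'(Y)]$ applied to $Y = a+(T-t)X$ with $X \sim \xi$ and the monotone function $F'$.
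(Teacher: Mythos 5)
Your proposal is correct, and it takes a genuinely different route from the paper's. The paper proves the lemma by verifying that the candidate function satisfies the viscosity PDE~\eqref{eq:dpp}: it reduces (via the continuity Lemma~\ref{lem:Ucontinuity}) to atomic $\xi$, notes that $U$ is affine in $\pmb{\xi}^\alpha$ so $D^2_{\pmb{\xi}}U\equiv 0$, and checks $\pd{U}{t}+\ol{\xi}\pd{U}{a}\le 0$ precisely by the covariance inequality you describe in your ``sanity check''; uniqueness in Theorem~\ref{thm:dpp} then closes the argument. Your main argument is instead a direct, self-contained probabilistic bound: a first application of Jensen to the uniform time-average, Fubini, and a second application of conditional Jensen on the martingale $(S_s)$ against the convex map $x\mapsto F(a+(T-t)x)$. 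This is more elementary — it does not pass through the atomic approximation, the value-function continuity, or the PDE uniqueness theory — and it applies to any admissible model of Problem~\ref{prob:basic} directly, whereas the paper's route showcases how Theorem~\ref{thm:dpp} is meant to be used. The only cosmetic blemish in your write-up is the phrase ``$\Fc_t$-conditional law of $S_T$ equal to $\xi$'': by the paper's definition, $U(t,\xi,a)$ is the value of Problem~\ref{prob:MVM} on the horizon $T-t$ with $\xi$ as the \emph{unconditional} terminal law, so all of your conditional expectations on $\Fc_t$ can simply be replaced by plain expectations (the intermediate conditioning on $\Fc_s$ and the tower property are of course still needed in the second Jensen step). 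With that reading the computation is exactly right, the lower bound via the jump-then-constant model is immediate, and the two bounds match.
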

In terms of the class of models considered in Corollary~\ref{cor:ControlForm}, this corresponds to a
model which takes $\lambda_u = 0$ until the measure $\xi_u \in \Pc^s$, and then setting $\lambda_u =
1$ until $T_u = T$.

\begin{proof}
  By continuity, we are only required to check that \eqref{eq:dpp} holds for atomic $\xi$. However, if we 
  write $\ol{\xi} = \int x \, \xi(\dx)$, then
  \begin{align*}
    \pd{U}{t} + \ol{\xi} \pd{U}{a} & = \int F'\left(a+(T-t)x\right) \left(\ol{\xi}-x\right)\,\xi(\dx) \\
                                   & \le \int F'\left(a+(T-t)\ol{\xi}\right)
                                     \left(\ol{\xi}-x\right)\,\xi(\dx) = 0.
  \end{align*}
  Moreover, if $t = T$ or $\xi \in \Pc^s$ then we have equality.
  
  Recalling the notation of Theorem~\ref{thm:dpp}, we have
  \begin{align*}
    U(t,\xi,a) = \sum_{j=0}^{|\alpha(\xi)|-1} F\left(a + (T-t) x_{i_j}\right)\xi^{i_j},
  \end{align*}
  and computing the second derivatives, we have $D^2_{\pmb{\xi}}U = 0$. Hence \eqref{eq:dpp} holds.
\end{proof}

In this convex case, we are easily able to provide a martingale inequality interpretation of this result. Indeed, this has already appeared in \cite{Stebegg:14}. Since this will help our intuition, we provide an alternative approach to \cite{Stebegg:14}, which will enable us to gain insight into the optimal strategies for the non-convex case. We restrict first to the case where $F(a) = (a-K)_+$, for some $K>0$, and we write $Y_t = A_t + (T-t) S_t$. We suppose also that $(S_t)$ is a continuous semi-martingale (although a pathwise analogue of this argument is possible, where $S_t$ is assumed just to have continuous paths). From the definition of local time, we get:
\begin{equation*}
  (A_T-K)_+ =  (Y_T-K)_+ = (Y_0-K)_+ + M_T + L_T^{Y,K},
\end{equation*}
where $M_T$ is a local martingale, and $L_T^{Y,K}$ is the local time of $Y$ at the level $K$. It
follows from the definition of $Y$, that we have:
\begin{equation*}
  L_{T}^{Y,K} = \int_0^T (T-t) \,\di L_t^{S,K_t}, \quad \text{ where } K_t = \frac{Y_t-A_t}{T-t},
\end{equation*}
so $L^{S,K_t}$ is the local time of the asset price along the curve $K_t$. That is, we have a local time contribution coming from the crossing of the curve $K_t$ by the asset price. However, for a given distribution of $S_T$, the local time at each value of $x$ is fixed, and decreases as $|x-S_0|$ increases.  We now argue that $L_{T}^{Y,K}$ is maximised by trying to accumulate all the local time on the curve $K_t$ as close as possible to time zero: that is, all crossings of $S_t = K_t$ should happen as close to time zero as possible. This happens since if $S_t \neq K_t$, then $|S_t-K_t|$ is increasing, and there will be less local time which can later be accumulated at $K_t$ since the process needs to accumulate the local time at a (relatively) more distant point. In addition, the factor $(T-t)$ which appears in the integral only makes the weight of local time accumulated at later times smaller.

It follows (and again, this can be made rigorous) that the optimal model should make all crossings of $K_t$ necessary to embed in a short time interval. After this time, it is irrelevant how the process behaves, so long as it either remains above or below $K_t$.

\begin{remark}\label{rk:posint}
  The cases where there is a positive interest rate can be handled similarly (the process $Y_t =A_t+\frac{S_t}{\rho}\left(e^{\rho(T-t)}-1\right)$ should be used instead). In addition, by adding constraints, one can extend to general convex functions, with the model which crosses each relevant curve $K_t$ corresponding to a convexity point of $F$ immediately being the optimal choice. 
\end{remark}

\subsection{A non-convex example}\label{sec:nonconvex}

In this section, we provide a solution to the problem for a non-convex example. Specifically, we use
the intuition from the convex case established above to try and find a solution to the
problem for a payoff function of the form:
\begin{equation}\label{eq:payoff_nonconvex}
  F(A_T)=(A_T-K_1)_+-(A_T-K_2)_+, \qquad K_1 < K_2.
\end{equation}
Given certain additional assumptions on the measure we wish to embed, we will then verify that an optimal model may be determined through the use of Theorem~\ref{thm:dpp}. We observe that the results of this paper simply verify the existence of an optimal model. Given the existence of an optimal model, the existence of a super-hedging strategy follows from general results (\eg{} \citet{Dolinsky:2013aa}).

The intuition established above suggests that we wish to gain the benefit of the convexity at $K_1$ immediately, while leaving the concavity at $K_2$ until as late as possible. However there is a trade-off, since the process may sacrifice some of the convexity at $K_1$ by waiting at $K_2$.  To specify this, note that since the payoff is constant for $A_T\ge K_2$, it must be suboptimal to have positive support on events for which $Y_t>K_2$, $t\in(0,T]$. Indeed, the payoff will not be improved by this but the martingale property of $Y$ implies that more mass must then be put on events yielding an average strictly less than $K_2$. In consequence, at least for some values of $K_1, K_2$, it is natural to conjecture that at time $0$, $S$ will either run to $K_2$, or to some lower level; at the lower level, the paths will behave as indicated by Lemma~\ref{lem:convex}.

For a measure $\mu$ with continuous support, we therefore define the level $\eta$ by
\begin{equation}\label{eq:nonconvex_eta}
  \eta:=\inf\left\{x\in\R: \int_\eta^\infty x\mu(\di x)\ge K_2\right\}. 		
\end{equation}
We then expect to accumulate all mass above $x=\eta$ into a branch of the underlying taking the value $S_t=K_2$, $t\in(0,T)$, and embedding $\ind_{x\ge \eta}\mu(\di x)$ at $t=T$.  As for the mass to be embedded on $[0,\eta)$, we expect the same optimal behaviour as detected for the convex case in Section \ref{sec:convex}.  Put differently, at $u=T^{-1}_0$, with probability $\int_\eta^\infty\mu(\di x)$ we expect the measure-valued martingale $\xi_u$ to take the value $\frac{\ind_{x\ge \eta}\mu(\di x)}{\mu([\eta,\infty))}$ and stay constant until $T^{-1}_T$, and with probability $\int_0^\eta\mu(\di x)$ we expect to recover the structure of Lemma \ref{lem:convex}.

To specify this, we restrict to a certain class of measures $\mu$. Specifically, we consider the problem at time $t \in [0,T]$ with current average $A_t = a$ when we take $\vec{x}^\alpha = (-1,0,1)$, so $|\alpha| = 3$, and consider the terminal distribution
\begin{equation}\label{eq:mu_nonconvex}
  \mu=(1-\beta-\gamma)\delta_{-1}+\beta\delta_{0}+\gamma\delta_{1}, \qquad \beta,\gamma\in(0,1).
\end{equation} 
	That is, $\pmb\xi^\alpha =(\beta,\gamma)$, and we write $V(t,a;\beta,\gamma)=V_\alpha(t,a;\pmb\xi^\alpha)$. Further, we let $K_1\in(-1,1)$ and $K_2\in(0,1)$. Suppose now that $a+\frac{\gamma}{\gamma+\beta}(T-t)<K_2\le a+(T-t)$. That is, the expected averages considering the mass at both $x=0$ and $x=1$, and at $x=1$ only, are, respectively, smaller and greater than $K_2$.  Following the reasoning above, at $u=T^{-1}_t$, we then expect to have split $\xi_u$ into the two measures:
\begin{equation} \label{eq:distr:iii}
  \xi^1=\frac{\bar\eta\delta_{0}+\gamma\delta_{1}}{\bar\eta+\gamma}
  \quad\textrm{and}\quad
  \xi^2=\frac{\big(1-\gamma-\beta\big)\delta_{-1}+\big(\beta-\bar\eta\big)\delta_{0}}{1-\gamma-\bar\eta},
\end{equation}
where (\cf{} \eqref{eq:nonconvex_eta}) $\bar\eta$ is given by
\begin{equation*}
  \bar\eta = \sup\left\{y: \frac{\gamma}{\gamma+y} \ge \frac{K_2-a}{T-t }\right\}
  = \gamma\left(\frac{T-t }{K_2-a}-1\right).
\end{equation*}
If $a<K_1$, this yields $V(t,a;\beta,\gamma)=(\gamma+\bar\eta)(K_2-K_1)$. However, if $a-(T-t)<K_1\le a$, the result for the convex case guides us to further split the measure $\xi^2$ into $\delta_{-1}$ and $\delta_0$ (equivalently, all mass at $x=0$ and $x=1$ might be accumulated in one measure; see further discussion below) and it follows that $V(t,a;\beta,\gamma)= (\gamma+\bar\eta)(K_2-K_1)+(\beta-\bar\eta)(a-K_1)$.  Similar considerations for the other cases guides us to define the following candidate value function:
\begin{equation} \label{eq:ex_v} 
  V(t,a;\beta,\gamma):=\left\{
    \begin{array}{lll}
      K_2-K_1 & \text{(i)} & K_2\le a^{-101}\\
      (2\gamma+\beta-1)(T-t)+a-K_1 & \text{(ii)} & K_1\le a^{-1}, a^{-101}<K_2\\
      \frac{2\gamma+\beta}{1+\frac{K_2-a}{T-t }}(K_2-K_1) & \text{(iii)} & a^{-1}<K_1, a^{-101}<K_2\le a^{01}\\
      \gamma(T-t )-(\gamma+\beta)(K_1-a) & \text{(iv)} & a^{-1}<K_1\le a^0, a^{01}<K_2 \\
      \gamma\frac{T-t }{K_2-a}(K_2-K_1) & \text{(v)} & a^0<K_1, a^{01}<K_2\le a^1\\
      \gamma(T-t-(K_1-a)) & \text{(vi)} &  a^0<K_1\le a^1<K_2 \\
      0 & \text{(vii)} & a^1<K_1
    \end{array}\right.  
\end{equation}  
where we used the notation $a^{i}=a+s^{i}(T-t)$, with $s^i=i$, $i\in\{-1,0,1\}$, $s^{01}=\frac{\gamma}{\gamma+\beta}$ (with the convention $\frac{\gamma}{\gamma+\beta}=K_2$ when $\gamma+\beta=0$) and $s^{-101}=s=2\gamma+\beta-1$ --- \ie{} the expected average taking the mass at various atoms into account. The function is depicted in Figure~\ref{fig:VF}, together with a candidate sample path.

\begin{figure}[ht]
  \centering
  \includegraphics[width=0.7\textwidth]{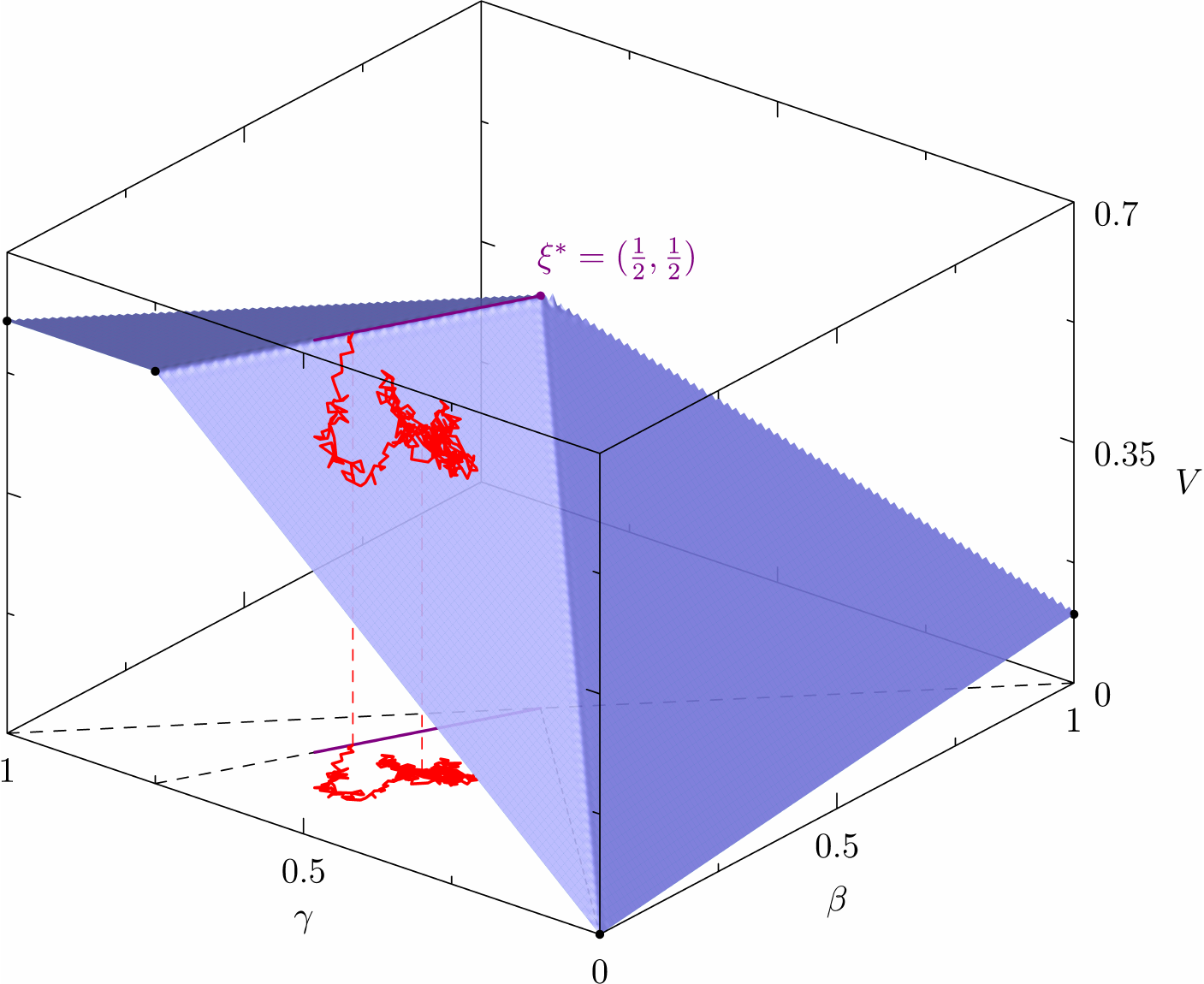}
  \caption{The value function graphed as a function of $\beta, \gamma$. Here $t=a=0, T=1$, $K_1 = -0.1, K_2 = 0.5$. Also shown (in red) is a possible path of $(\xi_r)$, starting from $(\beta, \gamma) = (\frac{1}{4},\frac{1}{2})$. The planar regions in the diagram correspond to the regions (i), (iii) and (iv) given in \eqref{eq:ex_v}. The process starts in region (iii), and runs until it hits the boundary of this region and region (i). The continuing path is then shown running along the boundary of regions (i) and (iii). In an optimal model, this behaviour happens at time 0, although note that there are many possible solutions: for example, the movement along the boundary between regions (i) and (iii) could happen at an time between $t=0$ and $t=T$. On reaching the point $\xi^*$, the process is unable to move any further before the time $t=T$ without being suboptimal.}
  \label{fig:VF}
\end{figure}

\begin{example}\label{ex:2}
  Observe that sending $K_2\to\infty$, $V(t,a;\beta,\gamma)$ reduces to the value-function for the (convex) payoff function $F(A_T)=(A_T-K_1)_+$ given in Section \ref{sec:convex} with $\mu$ given by \eqref{eq:mu_nonconvex}.  Alternatively, let $\beta=0$, $K_1=0$ and $K_2=\frac{1}{2}$. Then, $V(t,a;\gamma)$ reduces in the following way:
  \begin{equation} \label{eq:ex_dpe_2} 
    V(t,a;\gamma)=\left\{
      \begin{array}{lll}
        \frac{1}{2}, & 2\gamma-1> \frac{1/2-a}{T-t } \\
        \gamma\frac{1}{1+\frac{1/2-a}{T-t}}, & 2\gamma-1 \le \frac{1/2-a}{T-t}
      \end{array}\right. .
  \end{equation}
  Note that for $t=a=0$ and $2\gamma-1 \le 1/2$, $V_t+s V_a=0$ if and only if $\gamma=0$ or $\gamma=3/4$. Therefore the optimal model $(S_t)$ will jump to either $-1$ or $1/2$ at time $t=0$, and then stay constant until time $t=T$ where $\mu$ is embedded. 
\end{example}

It can be verified by elementary calculations that $V(t,a;\beta,\gamma)$ is continuous. The next result verifies that it is a (bounded) solution to equation \eqref{eq:dpp} with $F(a)$ and $\mu$ given by \eqref{eq:payoff_nonconvex} and \eqref{eq:mu_nonconvex}. Hence, according to Theorem \ref{thm:dpp}, $V(t,a;\beta,\gamma)$ is indeed the associated value function.

\begin{proposition}
  The function $V(t,a;\beta,\gamma)$ defined in \eqref{eq:ex_v} is the unique non-negative viscosity solution bounded by $K_2-K_1$, to the equation \eqref{eq:dpp} equipped with the boundary condition \eqref{eq:boundary_cond}. In particular, it is the value-function associated with the payoff \eqref{eq:payoff_nonconvex} and $\mu$ given by \eqref{eq:mu_nonconvex}.
\end{proposition}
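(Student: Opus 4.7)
My strategy is to check that the candidate $V$ from \eqref{eq:ex_v} satisfies every hypothesis of the uniqueness assertion in Theorem~\ref{thm:dpp}; uniqueness then both confirms that $V$ is a viscosity solution and identifies it with the value function for payoff \eqref{eq:payoff_nonconvex} and terminal law \eqref{eq:mu_nonconvex}. Non-negativity and the bound $V\le K_2-K_1$ are immediate from inspection of the seven cases, and continuity of $V$ is already granted by the paragraph preceding the statement. For the boundary data \eqref{eq:boundary_cond}, at $t=T$ all of $a^{-1},a^0,a^1,a^{01},a^{-101}$ reduce to $a$, so only regions (i), (ii) and (vii) are non-empty and $V$ equals $K_2-K_1$, $a-K_1$ and $0$ respectively, each matching $F(a)$. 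On each facet of $\D^2$ ($\beta=0$, $\gamma=0$ or $\beta+\gamma=1$) the measure collapses to one supported on at most two atoms, and the restriction of \eqref{eq:ex_v} to the facet matches either the Lemma~\ref{lem:convex} formula (once the atom at $0$ is removed, the remaining two-atom payoff is convex) or $F(a+(T-t)x)$ in the one-atom case.

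Turning to the interior of each of the seven regions, a crucial simplification is that on each piece $V$ is \emph{linear} in $(\beta,\gamma)$, so $D^2_{\pmb{\xi}}V\equiv 0$ and the second operator in \eqref{eq:dpp} vanishes identically. Writing $s=\vec{x}^\alpha\cdot\average=2\gamma+\beta-1$, it therefore suffices to verify $V_t+sV_a\le 0$ with equality in at least one piece. Direct computation gives equality in (i) and (ii); in (iii),
\begin{equation*}
V_t+sV_a=\frac{(2\gamma+\beta)(K_2-K_1)(a^{-101}-K_2)}{(T-t+K_2-a)^2}\le 0
\end{equation*}
using $a^{-101}\le K_2$; factorisation in (iv) yields $V_t+sV_a=(\gamma+\beta-1)(2\gamma+\beta)\le 0$; (v) mirrors (iii); in (vi), $V_t+sV_a=\gamma(2\gamma+\beta-2)\le 0$ since $2\gamma+\beta\le 2$; and (vii) is trivial.

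The main technical obstacle is upgrading these classical identities to a global viscosity solution across the interfaces between regions, where $V$ is only $C^0$. The sub-solution property is the easier direction: if $V-\phi$ has a local maximum at an interface point $x_0$, approaching $x_0$ from within an adjacent smooth region on which $\max\{V_t+sV_a,0\}=0$ holds classically, a one-sided comparison of the derivatives of $\phi$ and $V$ delivers $\max\{\phi_t+s\phi_a,\sup_{\vec{w}\in\Sphere^k}\tr(\vec{w}\vec{w}^\T D^2_{\pmb{\xi}}\phi)\}\ge 0$ at $x_0$. For the super-solution property one needs both $\phi_t+s\phi_a\le 0$ and $D^2_{\pmb{\xi}}\phi\preceq 0$ at every local minimum of $V-\phi$. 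The transport inequality follows again from the one-sided classical inequalities plus continuity of $V$; the second-order inequality reduces to global concavity of $(\beta,\gamma)\mapsto V(t,a;\beta,\gamma)$ for fixed $(t,a)$, because no $C^2$ function can touch a concave corner of $V$ from below, rendering the required inequality either classical (at smooth points) or vacuous (at concave corners). Since each piece is affine in $(\beta,\gamma)$, global concavity reduces to the finite algebraic statement that across every interface the affine extension from one side lies above the value on the other; this follows by inspection from the relative ordering of $K_1$, $K_2$ and the quantities $a^{-1}, a^0, a^{01}, a^{-101}, a^1$ defining the regions. With both viscosity inequalities verified, uniqueness in Theorem~\ref{thm:dpp} concludes the proof.
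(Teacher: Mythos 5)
Your overall strategy matches the paper's: check the boundary data, exploit that each of the seven pieces is affine in $(\beta,\gamma)$ so $D^2_{\pmb{\xi}}V\equiv 0$ and only the first-order operator matters classically, verify $V_t+sV_a\le 0$ region by region (your factorisations in (iii)--(vi) agree with the paper's after simplification and are cleaner in (iv) and (vi)), and then handle the interfaces by using the piecewise structure for the subsolution inequality and concavity for the supersolution inequality. Two points where your write-up deviates from the paper and deserve attention are the following.

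First, the boundary-facet justification. On the facet $\beta=0$ the support of the measure is $\{-1,1\}$, and the payoff $F(a)=(a-K_1)_+-(a-K_2)_+$ is \emph{not} convex on $[-1,1]$ when $-1<K_1<K_2<1$ (it has both a convex kink at $K_1$ and a concave kink at $K_2$). Your parenthetical ``once the atom at $0$ is removed, the remaining two-atom payoff is convex'' is therefore wrong, and Lemma~\ref{lem:convex} cannot be invoked there. The paper instead refers to the explicit two-atom computation in Example~\ref{ex:2}, and the boundary identity does hold; only the justification you offer is incorrect. The facets $\gamma=0$ and $\gamma+\beta=1$ do collapse the range of $A_T$ so that $F$ is effectively convex there, so those cases are fine.

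Second, the subsolution argument. The ``one-sided comparison of derivatives'' you invoke at interface points is correct in spirit but omits the key step: at an interface point $\vec{\bar z}$ one must exhibit a specific direction $\vec{\bar w}\in\Sphere^2$ \emph{tangent to the interface curve} in the $(\beta,\gamma)$ plane at fixed $(t,a)$ (both relevant interfaces, $\{2\gamma+\beta-1=\tfrac{K_2-a}{T-t}\}$ and $\{\tfrac{\gamma}{\gamma+\beta}=\tfrac{K_2-a}{T-t}\}$, are straight lines in $(\beta,\gamma)$), so that $V$ restricted to the line is affine and $V_{\vec{\bar w}\vec{\bar w}}=0$; then $(V-\varphi)_{\vec{\bar w}\vec{\bar w}}\le 0$ at the max yields $\varphi_{\vec{\bar w}\vec{\bar w}}\ge 0$. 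Approaching from an adjacent smooth region does not by itself control the relevant second-order quantity of $\varphi$ at $\vec{\bar z}$. Finally, both your argument and the paper's rely on the concavity of $(\beta,\gamma)\mapsto V$ to localize minima of $V-\varphi$ to the interior of a region; you rightly flag that this reduces to a finite algebraic check across interfaces, though neither you nor the paper carry it out explicitly.
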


\begin{proof}	
  Differentiating $V(t,a;\beta,\gamma)$ within the respective regions, we obtain that $V_t(t,a;\beta,\gamma)+sV_a(t,a;\beta,\gamma)=0$ in regions (i), (ii) and (vii), and
  \begin{equation} \label{eq:eq:diff} 
    V_t+sV_a
    \;=\;\left\{
      \begin{array}{lll}
        \frac{(K_2-K_1)(2\gamma+\beta)}{\left(1+\frac{K_2-a}{T-t }\right)^2}
        \left(-\frac{K_2-a}{(T-t )^2}+\frac{2\gamma+\beta-1}{T-t }\right)
        & \textrm{in (iii)}\\
        -\gamma+(2\gamma+\beta-1)(\gamma+\beta) & \textrm{in (iv)} \\
        -\gamma\frac{K_2-K_1}{(K_2-a)^2}
        \Big(K_2-a-(2\gamma+\beta-1)(T-t )\Big) & \textrm{in (v)} \\
        2\gamma\left(\gamma+\frac{\beta}{2}-1\right) & \textrm{in (vi)} 
      \end{array}\right. ,
  \end{equation}
  where $s=\vec{x}^\alpha\cdot(1-\beta-\gamma,\beta,\gamma)=2\gamma+\beta-1$.
  Using that $2\gamma+\beta-1\le \frac{K_2-a}{T-t }$ in regions (iii) and (v), and that $\gamma+\beta\le 1$ implies $(2\gamma+\beta-1)\frac{\gamma+\beta}{\gamma}\le 1$ for $\gamma>0$, it follows from \eqref{eq:eq:diff} that $V_t+sV_a\le 0$ within the respective regions. In consequence, with $V_{\vec{v}}$ denoting the directional derivative in the direction of $\vec{v}$, it holds on $\{\gamma+\beta\le 1\}$ that
  \begin{equation}
    \label{eq:direc_ineq} V_{\vec{v}}\le 0, \qquad
    \vec{v}=(1,2\gamma+\beta-1,0,0). 
  \end{equation}		
  
  Recall that $V(t,a;\pmb\xi)$ is a viscosity super (resp. sub) solution to \eqref{eq:dpp} if for each $\varphi\in \mathcal{C}^{1,1,2}$, and at each point $(\bar t ,\bar a,\pmb{\bar\xi})$ minimizing (resp. maximizing) $V-\varphi$,
  \begin{equation}	\label{eq:hjb_sub_super}
    \max\left\{\varphi_t+(2\bar\gamma+\bar\beta-1)
    \varphi_a, \frac{1}{2}\sup_{\vec{w}\in\mathbb{S}^2}
      \left[\tr\left(\vec{w}\vec{w}^\T D^2_{\pmb\xi} \varphi\right)\right]\right\} \le 0, \quad (\textrm{resp. $\ge 0$}).
  \end{equation}
  We first argue that $V$ is a sub solution. To this end, let $\varphi\in\mathcal{C}^{1,1,2.2}$ and $\vec{\bar z} = (\bar t, \bar a,\bar \beta,\bar \gamma)$ such that $\vec{\bar z}$ maximises $V-\varphi$.  Note that there exists $\vec{\bar w}\in\mathbb{S}^2$ such that the directional derivative at $\vec{\bar z}$ along $\vec{\bar w}$ (keeping $a$ and $t$ constant) satisfies $V_{\vec{\bar w}\vec{\bar w}}= 0$; if $\vec{\bar z}\in \{2\gamma+\beta-1=\frac{K_2-a}{T-t }\}$ or $\vec{\bar z}\in\{\frac{\gamma}{\gamma+\beta}=\frac{K_2-a}{T-t }\}$, let $\vec{\bar w}$ in the direction of that line.  Since $(V-\varphi)_{\vec{\bar w}\vec{\bar w}}\le 0$, it follows that $\tr\big(\vec{\bar w}\vec{\bar w}^\T D^2_{\pmb\xi} \varphi \big)\ge 0$ (note that $\tr\big(\vec{w}\vec{w}^\T D^2_{\pmb\xi} \varphi \big)=\vec{w}^\T D^2_{\pmb\xi} \varphi \;\vec{w}=\varphi_{\vec{w}\vec{w}}$). In consequence,
  \begin{equation} \label{eq:proof_visc1} 
    \sup_{\vec{w}\in\mathbb{S}^2}
    \left[\tr\left(\vec{w}\vec{w}^\T D^2_{\pmb\xi} \varphi \right)\right] \ge 0. 
  \end{equation}
  In order to show that $V$ is a super solution, let $\varphi\in\mathcal{C}^{1,1,2,2}$ and let $\vec{\bar z} = (\bar t,\bar a, \bar \beta,\bar \gamma)$ be a minimiser to $V-\varphi$. Due to the concavity of $V$ as a function of $\beta$ and $\gamma$ and the differentiability of $\varphi$, $\vec{\bar z}$ must lie strictly within one of the regions given in \eqref{eq:eq:diff}. Hence, for all $\vec{w}\in\mathbb{S}^2$, $\tr\big(\vec{w}\vec{w}^\T D^2_{\pmb\xi} (V-\varphi) \big)\ge 0$ and $\tr\big(\vec{w}\vec{w}^\T D^2_{\pmb\xi} V \big)=0$. In consequence,
  \begin{equation} \label{eq:proof_visc2} \sup_{\vec{w}\in\Sphere^2}
    \left[\tr\left(\vec{w}\vec{w}^\T D^2_{\pmb\xi} \varphi \right)\right] \le 0. 
  \end{equation} 
  Let now $\vec{v}=(1,2\bar \gamma+\bar \beta-1,0,0)$. Since $\vec{\bar z}$ minimises $V-\varphi$, it follows that $\varphi_{\vec{v}} \le V_{\vec{v}}$. According to \eqref{eq:direc_ineq}, we thus obtain
  \begin{equation*} 
    \varphi_t+ 
    (2\bar\gamma+\bar\beta-1)\;\varphi_a 
    ~\le~ V_{\vec{v}} ~\le~ 0, 
  \end{equation*}
  which combined with \eqref{eq:proof_visc2} renders \eqref{eq:hjb_sub_super}.
  
  It remains to argue the boundary conditions \eqref{eq:boundary_cond}. Note that for $t=T$, the only possible regions are (i), (ii) and (vii) (for $a\ge K_2$, $a\in[K_1,K_2)$ and $a< K_1$) for which $V(t,a;\beta,\gamma)$ is given, respectively, by $K_2-K_1$, $a-K_1$ and $0$. Hence, the terminal condition is satisfied. Next, note that for $\gamma=0$ and $\beta=0$ or $\gamma+\beta=1$, the problem reduces, respectively, to the convex case and the case presented in Example \ref{ex:2}. This verifies the second boundary condition and we conclude.
\end{proof}

We now discuss the optimal control associated with the value function \eqref{eq:ex_v} and the corresponding solution to the basic optimisation problem, Problem~\ref{prob:basic}.  Indeed, recall Lemma~\ref{lem:1}, which says that if Problem \ref{prob:MVM} admits an optimal solution, this solution corresponds to a solution of Problem~\ref{prob:basic}.  Naturally, the solution coincides with the conjectured optimal model used to deduce the form of $V(t,a;\beta,\gamma)$. However, our aim below is to illustrate how it may be deduced directly from the value function and, in consequence, from the dynamic programming equation \eqref{eq:hjb2}--\eqref{eq:cond_hjb} and to show that it is non-unique, and also non-trivial since it necessarily has a jump $t=T$ as well as $t=0$.  We let $T=1$ and split the behaviour into three parts.

(I)  \emph{Real time is kept constant while the measure-valued martingale evolves ($\lambda_u=0$ and $(\beta_u,\gamma_u)$ diffuses): $(S_t)$ jumps to certain points at time $t=0$.}
			
Depending on the parameters of the problem, the starting point $(0,0;\beta,\gamma)$ lies in one of the regions (i), (iii), (iv) or (v).  It follows from the DPP equation \eqref{eq:dpp}, that the model can evolve in real time only if $V_t+sV_a=0$. According to \eqref{eq:eq:diff}, while $V_t+sV_a=0$ for all $(\beta,\gamma)$ in regions (i), (ii) and (vii), it holds for the remaining regions that $V_t+sV_a=0$, if and only if,
\begin{equation}	\label{eq:ex_bound}
  \left\{
    \begin{array}{lll}
      2\gamma+\beta-1=\frac{K_2-a}{T-t} \quad\textrm{or}\quad (\beta,\gamma)=(0,0) & \textrm{in (iii)} \\
      \gamma+\beta=1 \quad\textrm{or} \quad (\beta,\gamma)=(0,0) & \textrm{in (iv)}\\
      2\gamma+\beta-1=\frac{K_2-a}{T-t} \quad\textrm{or} \quad \gamma = 0 & \textrm{in (v)} \\
      \gamma = 0 \quad \textrm{or}\quad (\beta,\gamma)=(0,1) & \textrm{in (vi)}
    \end{array}\right. .
\end{equation}
In consequence, if starting in region (i), one may immediately evolve in (real) time. However, if starting in regions (iii)--(v), (real) time must be kept constant while $(\beta_u,\gamma_u)$ are allowed to diffuse until the above boundaries are reached: that is, $\lambda_u=0$ until the measure-valued martingale $\xi_{\cdot} = (\beta_{\cdot},\gamma_{\cdot})$ satisfies \eqref{eq:ex_bound}.  Note that since $V(0,0;\beta_u,\gamma_u)$ is a martingale, if at the line $\frac{\gamma}{\gamma+\beta}=\frac{K_2-a}{T-t}$, diffusion will take place only in the direction of that line and the region remains the same until the boundaries are reached. This implies that the associated price process $(S_t)$ jumps to certain points at time $t=0$.

(II) \emph{Progress in real time only ($\lambda_u=1$): $(S_t)$ is kept constant.}

Once the measure-valued martingale satisfies \eqref{eq:ex_bound}, (real) time might start to evolve ($\lambda_u>0$). In particular, one might let $\lambda_u=1$ which implies that $(S_t)$ is kept constant. 	
On a case by case basis, it can be verified that once at a point where $V_t+sV_a=0$, this remains the case. For example, consider sitting at $\{\gamma+\beta=1\}$ in region (iv). With a slight abuse of notation, we see that at this line   
\begin{equation} \label{eq:evolution_time_2}
  \pd{}{t}a^{i}(t)=
  \left\{
    \begin{array}{lll}
      \pd{}{t} a(t)~=~2\gamma+\beta-1 ~\ge~0,& i=0\\
      \pd{}{t} \big(a(t)-(T-t)\big)=2\gamma+\beta~\ge~ 0,& i=-1\\
      \pd{}{t} \big(a(t)+\frac{\gamma}{\gamma+\beta}(T-t)\big)
      ~=~ 0	,& i=1
    \end{array}\right. . 
\end{equation}
Hence, when evolving in (real) time, and with no change in $\beta, \gamma$, $a^{-1}(t)=a(t)-(T-t)$ may move above $K_1$ and so the point $(t,a(t);\beta,\gamma)$ moves to region (ii). Since $V_t+sV_a=0$ within region (ii), the claim holds for this case.  Similar arguments apply to the other cases.

(III) \emph{When $T_u=T$, the measure-valued martingale $\xi_\cdot = (\beta_\cdot,\gamma_\cdot)$ terminates: $(S_t)$ jumps and embeds $\mu$ at $t=T$.}
		
Real time may run until $T_u=T$. Thereafter $\lambda_u=0$ and $(\beta,\gamma)$ diffuses until $\xi_u$ terminates; that is, until $\xi_\cdot =(\beta_\cdot,\gamma_\cdot)$ reaches $(0,0)$, $(0,1)$ or $(1,0)$. As expected, $V(T,a(T);\beta,\gamma)$ stays constant during this process as it is independent of $\gamma$ and $\beta$. This step corresponds to $S$ embedding $\mu$ via a jump at $t=T$.

The evolution in time and measure described in (II) and (III) could, partially, happen simultaneously or in the reverse order. This implies that the optimal model is not unique.  For example, having reached the line $\gamma+\beta=1$ in region (iv), one might let $(\beta_u,\gamma_u)$ continue to diffuse until reaching either of the points $(1-\frac{K_2-A_u}{T-u},\frac{K_2-A_u}{T-u})$ or $(1,0)$, before letting (real) time evolve. This corresponds to the behaviour used to deduce $V(t,a;\beta,\gamma)$: \ie{} $(S_t)$ jumps to one of the values $-1$, $0$ or $K_2$ at time $t=0$.  Alternatively, by letting $\lambda_u\in(0,1)$, and supposing that $A_t \ge K_1$, one may let time and measure evolve simultaneously, which corresponds to $Y_u = A_u + (T-u)S_u$, $u\in(t,1)$ being either constantly equal to $-1$ or moving (as a continuous martingale) between the values $0$ and $K_2$. Observe that this behaviour may result in a different distribution to $A_T$ in comparison with the case where all the diffusion happens immediately.  Similar behaviour can be observed in the regions (i), (ii) and (v), although the distribution of $F(A_T)$ then remains the same.
     
While the optimal model is not unique, we note that it has certain characteristics: the model necessarily has a jump at both $t=0$ and $t=T$. Indeed, there is a certain amount of mass which is `locked in', and cannot be embedded until the terminal time $t=T$. This to ensure that $S_t=K_2$, $t\in(0,T)$, with a certain probability (\eg{} in regions (iii) and (v)). On the other hand, sending $K_2\to\infty$ and thus isolating the behaviour at the convex kink $K_1$, we see that the mass terminating above/below $K_1$ must already at time $t=0$ be accumulated above/below $K_1$. Although affected by the presence of $K_2$, this feature is present also for the general case (\eg{} regions (iv) and (vi)).
	 
\section{Conclusions and Further Work}\label{sec:concl-furth-work}     

In this paper we have considered the model-independent pricing problem for Asian options using a novel approach based on measure-valued martingales. While this paper concentrated on the case of Asian options, the main ideas should generalise to other cases, and may provide insights beyond the existing literature. Moreover, there are a number of natural questions which arise from our work:
\begin{itemize}
\item Is it possible to generalise the results in this paper to the case of a general starting law? Financially, this has the interpretation of pricing a forward starting option at time $0$, where $0 < T_0 < T_1$, the option pays the holder the amount $F\left(\int_{T_0}^{T_1} S_u \, du\right)$ at time $T_1$ and the price of call options are known at times $T_0$ and $T_1$. Write $\lambda$ for the implied law of $S_{T_0}$ and $\mu$ for the implied law of $S_{T_1}$. It follows immediately from the results of this paper that the problem is equivalent to finding a function $m:\R \to \Pc^1, x \mapsto m_x$ which maximises $\int U(0,m_x,0) \, \lambda(\dx)$ over all functions $m$ such that $\int m_x(A) \, \lambda(\dx) = \mu(A)$, for all Borel sets $A$ and $x = \int y \, m_x(\dy)$. However, it would be interesting to have a dynamic formulation of the problem, similar to the simple case.
\item The PDE \eqref{eq:dpp} is formulated for the case of atomic measures. Is there a similar formulation that holds when $\xi$ is only assumed to be measure-valued?
\item What is the corresponding formulation for \eqref{eq:dpp} in the case of (say) options on variance?
\item Do the methods described above extend to related problems in higher dimensions? If the formulation is given in terms of a measure $\mu$ on $\R^d$, one might hope that a very similar approach would be possible. Is this also true of (the financially more meaningful) case where $S_t \in \R^d$, and the marginal distribution of each component of $S_T$ is specified?
\end{itemize}

\appendix
\section{A Formal Dynamic Programming Principle}
In this section, we formally derive a Dynamic Programming Principle (DPP) for the pricing problem in its weak form given in Definition~\ref{prob:MVM}. We note that our previous results do not make use of this DPP, but we believe that this result is of independent interest. We choose to follow closely the setup in \cite{Zitkovic:2014aa}; see however also \cite{El-Karoui:2013ab} and \cite{Nutz:2013aa} for similar arguments.

We denote by $\mathbb{D}$ the set of c\`adl\`ag paths on $[0,\infty)$ taking values in $E:=\Pc(\R)\times[0,T]\times\R$, where we equip $\Pc(\R)$ with the topology induced by the $\mathcal{W}_1$-metric and $E$ with the product topology; in particular, this renders $E$ a Polish space, and using the Skorokhod topology on $\mathbb{D}$ it is a Polish space too.  For $x,x'\in E$, we write $d(x,x'):=\Wc_1(\xi,\xi') \vee |t-t'| \vee |a-a'|$.  A generic path in $\mathbb{D}$ is denoted by $\omega$ and we use $X=(\xi,T,A)$ for the co-ordinate process: $X_r(\omega)=(\xi_r,T_r,A_r)(\omega)=\omega(r)$.

The set of all probability measures on $\mathcal{B}(\mathbb{D})$ is denoted by $\mathfrak{P}$. A map $\nu:E\times \mathcal{B}(\mathbb{D})\to[0,1]$ is called a (universally) measurable kernel if  i) $\nu(x,\cdot)\in\mathfrak{P}$ for all $x\in E$, and ii) $E\ni x\to\nu(x,A)$ is (universally) measurable for all $A\in\mathcal{B}(\mathbb{D})$; recall that the universal $\sigma$-algebra is the intersection of the completions of the Borel $\sigma$-algebra over all probability measures on the space, and that universally measurable functions are integrable with respect to any such probability measure.  We write $\nu_x$ for the probability measure $\nu(x,\cdot)$ and interpret $\nu$ as a (universally) measurable map $E\to\mathfrak{P}$.

A Borel-measurable map from $\mathbb{D}$ to $[0,\infty)$ is called a random time.  For any random time $\tau$, we define the shift-operator $\theta_\tau$ on $\mathbb{D}$ by $X_r(\theta_\tau(\omega))=X_{\tau(\omega)+r}(\omega)$.  Further, for a random time $\tau$ and any two paths $\omega,\omega'\in\mathbb{D}$ such that $X_\tau(\omega)=X_0(\omega')$, the concatenation $\omega*_\tau\omega'$ is an element of $\mathbb{D}$ specified by 
\begin{equation*}
  X_t(\omega*_\tau\omega')
  =
  \ind_{\{t<\tau(\omega)\}} X_t(\omega)
  +
  \ind_{\{t\ge \tau(\omega)\}} X_{t-\tau(\omega)}(\omega').
\end{equation*}
For a random time $\tau$, a probability measure $\mu\in\mathfrak{P}$ and a universally measurable kernel $\nu$, we then define the concatenation $\mu*_\tau\nu$ as the probability measure in $\mathfrak{P}$ given by 
\begin{equation*}
  (\mu*_\tau\nu)(A)=
  \iint\ind_A(\omega*_\tau\omega')\nu_{X_\tau(\omega)}(\di\omega')\mu(\di\omega), 
  \qquad A\in\mathcal{B}(\mathbb{D}). 
\end{equation*}

We let $\mathbb{F}^0=\{\Fc^0_r\}_{r\in[0,\infty)}$ denote the filtration generated by the co-ordinate process $X$, and let $\mathbb{F}=\{\Fc_r\}_{r\in[0,\infty)}$ be its right-continuous hull; i.e. $\Fc_r=\cap_{s>r}\Fc^0_s$, for $r\ge 0$.  For $x=(\xi,t,a)\in E$, we denote by $\Pc_x$ the set of probability measures in $\mathfrak{P}$ for which:
\begin{enumerate}[label=(\roman*)]
\item\label{item:1} $X_0=x$ a.s., 
\item\label{item:2} $\xi_r$ is a measure-valued $\mathbb{F}$-martingale,
\item\label{item:3} $T_r$ is non-decreasing with $\lim_{r\to\infty}T_r=\infty$, a.s.,
\item\label{item:4} $A_r=a+\int_0^{r\wedge\tau_0} \bar\xi_{u-}\di T_u$ a.s., where $\bar\xi_\cdot=\int x\xi_\cdot(\dx)$ and $\tau_0=\inf\{r:T_r\ge T\}$.
\end{enumerate}
Finally, we note that according to Lemma 3.12 in \cite{Zitkovic:2014aa}, there exists a measurable functional $\bar X=(\bar\xi,\bar T,\bar A):\mathbb{D}\to E$ such that $\bar X(\omega)=\lim_{t\to\infty}X_t(\omega)$ whenever the limit exists and $\bar X(\theta_t(\omega))=\bar X(\omega)$ for all $t\ge 0$. We let $G(\omega):=F(\bar A(\omega))$; for any $\mu\in\Pc_x$, $x\in E$, we then have that $G=\lim_{t\to\infty}F(A_t)$ a.s.  We define the problem: 
\begin{equation}\label{eq:vdefn}
  v(x)=\sup_{\mu\in\Pc_x}\E^\mu\left[G\right]. 
\end{equation}

\begin{lemma}
  The value function defined in \eqref{eq:vdefn} coincides with the value function given in \eqref{eq:Udefn}. In particular, $x\mapsto v(x)$ is continuous.  
\end{lemma}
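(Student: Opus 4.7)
The plan is to establish the identity $v(\xi, t, a) = U(t, \xi, a)$ for $x=(\xi,t,a)\in E$; continuity of $v$ then follows immediately from Lemma~\ref{lem:Ucontinuity}. I will argue the two inequalities separately, by, respectively, pushforward from an admissible candidate for Problem~\ref{prob:MVM} to a measure on the canonical path space, and descent from any $\mu \in \Pc_{(\xi,t,a)}$ to an admissible candidate for Problem~\ref{prob:MVM}.

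For the inequality $v \ge U$, start from any admissible pair $(\lambda_r, \xi_r)_{r \ge 0}$ on some filtered probability space $(\Omega, \Gc, (\Gc_r), \P)$ meeting the requirements of Problem~\ref{prob:MVM} with initial data corresponding to $(\xi,t,a)$ and consider the $\Db$-valued random element
\[
  \omega \longmapsto \Bigl(\xi_r(\omega),\; t + \textstyle\int_0^r \lambda_s(\omega)\,\ds,\; a + \int_0^r \bar\xi_{u-}(\omega)\lambda_u(\omega)\,\du\Bigr)_{r \ge 0}.
\]
Its pushforward $\mu$ under $\P$ satisfies (i)--(iv) essentially by construction: (ii) is the fact that $(\xi_r)$ is a measure-valued martingale, (iii) follows from $\lambda_r \in [0,1]$ and $\int_0^\infty \lambda_s\,\ds = \infty$ in Problem~\ref{prob:MVM}, and (iv) reduces to $\di T_u = \lambda_u\,\du$. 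Under $\mu$, $G = F(\bar A)$ coincides a.s.\ with $F(A_T)$ in the original formulation, so taking the supremum over admissible pairs yields $v(\xi, t, a) \ge U(t, \xi, a)$.

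For the inequality $v \le U$, fix $\mu \in \Pc_{(\xi,t,a)}$ and work on the canonical probability space $(\Db, \Bc(\Db), \mathbb{F}, \mu)$. The coordinate process $(\xi_r, T_r, A_r)$ provides a measure-valued $\mathbb{F}$-martingale starting from $\xi$, a non-decreasing clock from $t$ to $T$ (with $\tau_0 < \infty$ a.s.\ by the state-space constraint $T_r \in [0,T]$ combined with (iii)), and a running average satisfying (iv). Setting $S_u := \bar\xi_{T^{-1,\ast}_u}$ for $u \in [t, T)$ exactly as in \eqref{eq:SDefn} produces a \cadlag{} martingale, and a time-change of variables in the Stieltjes integral in (iv) identifies $\bar A = a + \int_t^T S_u\,\du$ $\mu$-almost surely. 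To obtain a bona fide candidate for Problem~\ref{prob:MVM} (in which $(\xi_r)$ must be finitely terminating with $S_T \sim \xi$), extend $(S_u)$ at $u = T$ by a single conditional jump to a draw from $\xi_{\tau_0-}$, which leaves $A_T$ unchanged, and correspondingly complete $(\xi_r)$ past $\tau_0$ so as to terminate at the singular measure $\delta_{S_T}$; after the reparametrisation of Lemma~\ref{lem:scaling_after_dynamics} this matches the admissibility requirements of Problem~\ref{prob:MVM}, so $\E^\mu[G] \le U(t, \xi, a)$.

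The main technical obstacle is the careful bookkeeping around jumps of $T_r$: condition (iv) uses the left-limit $\bar\xi_{u-}$, so a jump $\Delta T_{r_0}$ at $r_0$ contributes $\bar\xi_{r_0-}\Delta T_{r_0}$ to $A$, and this must be reconciled with the flat segment of $S_u$ over $u\in[T_{r_0-}, T_{r_0}]$ that arises from the atom in the right-continuous inverse $T^{-1,\ast}$. Together with the completion argument at $t = T$ described above, this is what makes the two formulations yield the same value; once the identity $v = U$ is in hand, the continuity of $v$ on $E$ is inherited directly from the continuity established in Lemma~\ref{lem:Ucontinuity}.
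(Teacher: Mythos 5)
Your overall architecture — two inequalities via pushforward and descent — matches the paper's own proof, and the pushforward direction is essentially right (modulo a small slip: the third coordinate of the pushforward should read $a+\int_0^{r\wedge\tau_0}\bar\xi_{u-}\lambda_u\,\du$, with the stop at $\tau_0$, to actually satisfy condition~(iv) in the definition of $\Pc_x$).

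The problem is in the descent direction. You claim that, after constructing the terminating continuation past $\tau_0$ and the conditional jump of $S$ at $u=T$, "after the reparametrisation of Lemma~\ref{lem:scaling_after_dynamics} this matches the admissibility requirements of Problem~\ref{prob:MVM}." This is the wrong tool: Lemma~\ref{lem:scaling_after_dynamics} already \emph{presupposes} that the filtration is Brownian, that $(\xi_r)$ is continuous and finitely terminating, and that $\mu\in\Pc^1(\X_N)$; it only massages $(\lambda,\vec{w})$ into the normalised relation \eqref{eq:condn}. A generic $\mu\in\Pc_x$ need not give you a continuous $\bar\xi_\cdot$ (the definition of $\Pc_x$ imposes no path-continuity and $\mathbb F$ is not Brownian), so Lemma~\ref{lem:scaling_after_dynamics} cannot bridge that gap, and as written the admissibility of your constructed multiple for Problem~\ref{prob:MVM} is not established. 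The paper instead routes through Lemma~\ref{lem:1}: the pair $(T,\xi)$ under $\mu\in\Pc_x$ (together with your terminating completion past $\tau_0$ and the jump at $T$) produces a \cadlag{} martingale $(S_u)$ with $S_T\sim\xi$, i.e.\ a candidate for Problem~\ref{prob:basic}; the Monroe embedding and Brownian-filtration reduction inside the proof of Lemma~\ref{lem:1} then supply both the absolute continuity of the clock and the continuity of $\bar\xi_\cdot$, giving an admissible Problem~\ref{prob:MVM} candidate whose value dominates (or approximates) $\E^\mu[G]$. Replacing your citation of Lemma~\ref{lem:scaling_after_dynamics} with Lemma~\ref{lem:1} repairs the argument and recovers the paper's proof.
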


\begin{proof}
  Let $(\Omega, \Gc, (\Gc_r), \P, (\xi_r), (\lambda_r))$ be a multiple as specified in Problem \ref{prob:MVM}. Without loss of generality, let $x=(\mu,0,0)$ (Note that in Problem~\ref{prob:MVM} it was assumed only that $\xi_{0-} = x$; by considering a time transformation $t \mapsto \frac{(t-\eps)_+}{T-\eps}T$, this difference can be seen to be irrelevant). 
  Since any martingale is a martingale in its own filtration, it follows that any such multiple induces on the canonical space $\mathbb{D}$ a measure $\mu\in\Pc_x$.
  Conversely, any probability measures $\mu\in\Pc_x$ together with the space $(\mathbb{D},\mathcal{B}(\mathbb{D}),\mathbb{F})$ and the canonical process $(\xi,T)$ produces such a multiple. Indeed, the fact that one may, without loss of generality, assume that $T_\cdot$ is absolutely continuous a.s. follows as in the proof of Lemma~\ref{lem:1}. Moreover, since $\tau_0<\infty$ a.s., for any pair $(T_r,\xi_r)$ with $\xi$ a measure-valued martingale, one may construct a terminating measure valued martingale $\tilde\xi_r$ such that $(T_r,\tilde\xi_r)$ yields the same value of the payoff.
  The continuity is then an immediate consequence of Lemma \ref{lem:Ucontinuity}.	 
\end{proof}

\begin{remark}
  Let $\mathbb{\tilde F}^0$ be the filtration generated by $(T,\bar\xi)$, where $\bar\xi_\cdot=\int x\xi_\cdot(\dx)$, and let $\mathbb{\tilde F}$ be its right-continuous hull. Further, let $\tilde \Pc_x$ denote the set of measures in $\mathfrak{P}$ which satisfy properties \ref{item:1} to \ref{item:4} above with the difference that $\xi$ is only assumed to be a measure-valued $\mathbb{\tilde F}$-martingale. We then have that
  \begin{equation} 
    v(x)
    =
    \sup_{\mu\in\tilde\Pc_x}\E^\mu\left[G\right].
  \end{equation}
  Indeed, this follows from the proof of Lemma~\ref{lem:1}, where the constructed measure-valued martingales are indeed adapted to the filtration generated by $T_\cdot$ and $\bar\xi_\cdot$.  \end{remark}

We are now ready to state the DPP. For simplicity we provide it here for bounded payoff functions. We denote by $\mathcal{T}$ the set of finite $\mathbb{F}$-stopping times.

\begin{theorem}
  Let $F:\R_+\to\R_+$ be bounded and Lipschitz. Then, for all $x\in E$ and $\tau\in\mathcal{T}$,
  \begin{equation*}
    v(x)=\sup_{\mu\in\Pc_x}\E\left[v(X_\tau)\right].
  \end{equation*}
\end{theorem}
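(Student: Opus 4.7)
The plan is to establish the two inequalities $v(x) \le \sup_{\mu\in\Pc_x}\E^\mu[v(X_\tau)]$ and $v(x)\ge \sup_{\mu\in\Pc_x}\E^\mu[v(X_\tau)]$ separately, following the canonical-space/measurable-selection paradigm of \cite{Zitkovic:2014aa}. The two abstract ingredients I need are stability of the class $\Pc_x$ under (i) regular conditional probabilities along $\Fc_\tau$ and (ii) concatenation with a universally measurable family of continuation measures. Once these are granted, continuity of $v$ (Lemma~\ref{lem:Ucontinuity}) takes care of most measurability issues and the two inequalities follow by routine computations.

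For the upper bound, starting from $\mu\in\Pc_x$ I would disintegrate $\mu$ along $\Fc_\tau$ to obtain a regular conditional probability $\omega\mapsto \mu^{\tau}_\omega$ and verify that for $\mu$-a.e.\ $\omega$ the shifted measure $(\theta_\tau)_*\mu^{\tau}_\omega$ lies in $\Pc_{X_\tau(\omega)}$. The checks amount to going through properties \ref{item:1}--\ref{item:4}: the initial condition is immediate from the definition of $\theta_\tau$; the monotonicity and divergence of $T$ and the integral formula for $A$ are pathwise and transfer for free; the measure-valued martingale property of $\xi$ under the shifted filtration follows by testing against a countable convergence-determining class in $C_b(\Rp)$ and invoking standard facts about regular conditional probabilities together with Remark~\ref{rem:mvm_def}. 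The tower property then yields
\[
\E^\mu[G] = \E^\mu\!\left[\E^{(\theta_\tau)_*\mu^{\tau}_\omega}[G]\right] \le \E^\mu[v(X_\tau)],
\]
and taking the supremum over $\mu$ delivers the upper bound.

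For the lower bound, fix $\eps>0$. Because $v$ is continuous, a straightforward covering argument — partition $E$ into small Borel sets on which $v$ varies by at most $\eps$, pick one near-optimizer in $\Pc_y$ on each piece — produces a universally measurable kernel $\nu^\eps:E\to\mathfrak{P}$ with $\nu^\eps_y\in\Pc_y$ and $\E^{\nu^\eps_y}[G]\ge v(y)-\eps$ for all $y\in E$; continuity allows me to sidestep any heavy Jankov--von Neumann machinery. Given $\mu\in\Pc_x$ and $\tau\in\mathcal{T}$, form $\tilde\mu:=\mu *_\tau \nu^\eps$. The core verification is then that $\tilde\mu\in\Pc_x$: \ref{item:1}, \ref{item:3} and \ref{item:4} follow by pathwise concatenation, while the $\mathbb{F}$-martingale property of $\xi$ on $[0,\infty)$ under $\tilde\mu$ is obtained by splicing the two pieces via the tower property at $\tau$. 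Once $\tilde\mu\in\Pc_x$ is established,
\[
v(x)\ge \E^{\tilde\mu}[G] = \E^\mu\!\left[\E^{\nu^\eps_{X_\tau}}[G]\right] \ge \E^\mu[v(X_\tau)] - \eps,
\]
and taking the supremum over $\mu$ and letting $\eps\downarrow 0$ concludes.

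The principal obstacle is the concatenation stability of the measure-valued martingale property, which, unlike in the standard controlled diffusion setting, is an infinite-dimensional constraint: $\xi_\cdot(A)$ must be a martingale for every Borel set $A$. The way around this is exactly Remark~\ref{rem:mvm_def}, which reduces the constraint to testing against a countable family, so that a single $\P$-null set suffices to handle all tests simultaneously and the splicing via the tower property at $\tau$ goes through. A secondary technicality is that $G=\lim_{t\to\infty} F(A_t)$ is defined through a limit rather than evaluation at a bounded time; however, under any element of $\Pc_x$ the time $\tau_0$ is a.s.\ finite and $F$ is bounded, so dominated convergence lets all limits be exchanged with the expectations appearing above.
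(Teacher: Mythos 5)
Your overall two-inequality plan and the upper bound argument match the paper's proof closely: disintegrate $\mu$ along $\Fc_\tau$, transfer the measure-valued martingale property to the conditional measures by testing against a countable family (this is the role played by Proposition~3.11 in \cite{Zitkovic:2014aa}, which the paper cites), and apply the tower property. That part is fine.

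The lower bound, however, has a genuine gap. You write that continuity of $v$ plus a ``straightforward covering argument --- partition $E$ into small Borel sets on which $v$ varies by at most $\eps$, pick one near-optimizer in $\Pc_y$ on each piece'' yields a universally measurable kernel with $\nu^\eps_y \in \Pc_y$ for \emph{all} $y$. This does not go through as stated, because $\Pc_y$ depends on $y$ through the hard constraint $X_0 = y$ a.s.: if you select a single near-optimizer $\P_n \in \Pc_{x^n}$ for a representative point $x^n$ of the piece $C^n$, then $\P_n \notin \Pc_y$ for any other $y \in C^n$. Continuity of $v$ guarantees that near-optimal \emph{values} vary slowly, but it does not by itself produce, for each $y$, a measure in the set $\Pc_y$. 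The paper's proof fills precisely this gap with the explicit path-modification map $\alpha^x$ in \eqref{eq:path_modification}, which pushes forward a near-optimizer $\P_n \in \Pc_{x^n}$ to $\P_n \circ (\alpha^x)^{-1} \in \Pc_x$ by (a) transporting the measure component via a $\Wc_1$-optimal coupling $m^\xi$, (b) shifting the time component, and (c) re-integrating the average, followed by a quantitative modulus-of-continuity estimate showing $|\E^{\P_n}[G] - \E^{\P_n\circ(\alpha^x)^{-1}}[G]|$ is small in $d(x^n, x)$. This transport step is the actual content that lets one ``sidestep Jankov--von~Neumann'' --- it is not automatic from continuity of $v$, and it requires a separate verification (done via the argument in Lemma~\ref{lem:Ucontinuity}) that the pushforward preserves the measure-valued martingale property. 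Once you introduce this map, your covering plan becomes correct and recovers the paper's kernel \eqref{eq:def_nu}. Your concatenation argument for $\mu *_\tau \nu^\eps \in \Pc_x$ and the final chain of inequalities are then fine, and your identification of the countable-test-family reduction (Remark~\ref{rem:mvm_def}) as the splicing mechanism is consistent with the paper's appeal to Proposition~3.10 of \cite{Zitkovic:2014aa}.
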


\begin{proof}
  Given $\varepsilon>0$, $x\in E$ and $\tau\in\mathcal{T}$, take $\mu\in\Pc_x$ such that $v(x)-\varepsilon\le \E^\mu[G]$. Let $\nu_x$ be the regular conditional probability distribution of $\theta_\tau$ under $\mu$ given $X_\tau=x$; since $\mathbb{D}$ is Polish it exists $\mu\circ X^{-1}_\tau$-a.s. Recall that for any $f\in C_b(\R_+)$, $\xi_\cdot(f)$ is a bounded $\mu$-martingale. By use of the same argument as given in the proof of Proposition 3.11 in \cite{Zitkovic:2014aa}, we may then conclude that $\xi_\cdot(f)$ is a $\nu_x$-martingale for $\mu\circ X_\tau^{-1}$-almost all $x\in E$. It follows that $\nu_x\in\Pc_x$ for $\mu\circ X_\tau^{-1}$-almost all $x\in E$. As argued in the proof of Proposition 2.5 in \cite{Zitkovic:2014aa}, we may further pick a universally measurable version of $\nu_x$ such that $\nu_x\in\Pc_x$ for all $x\in E$.
  Now, note that $G(\omega')=G(\omega*_\tau\omega')$ for all $\omega,\omega'\in\mathbb{D}$ with $X_\tau(\omega)=X_0(\omega')$, and thus
  \begin{eqnarray*}
    \iint G(\omega')\nu_{X_\tau(\omega)}(\di\omega')\mu(\di\omega)
    =\iint G(\omega*_{\tau}\omega')\nu_{X_\tau(\omega)}(\di\omega')\mu(\di\omega).
  \end{eqnarray*}	
  By use of the properties of the r.c.p.d., we thus obtain the following line of equalities: 	
  \begin{eqnarray*}
    \E^\mu[G]
    =\E^\mu[G\circ\theta_\tau]
    =\E^{\mu*_\tau\nu}[G]
    =\E^\mu[g(X_\tau)],
  \end{eqnarray*}	
  where $g(x)=\E^{\nu_x}[G]=\int G(\omega')\nu_x(\di\omega')$. 
  Hence, $v(x)-\varepsilon\le \E^\mu[v(X_\tau)]$ for some $\mu\in\Pc_x$, and since $\varepsilon$ was chosen arbitrarily we obtain $v(x)\le\sup_{\mu\in\Pc_x}\E\left[v(X_\tau)\right]$.

  In order to argue the reverse inequality, for any $\varepsilon>0$, we first argue the existence of a measurable kernel $\nu$ with $\nu_x\in\Pc_x$ and $ \E^{\nu_x}[G]\ge v(x)-\varepsilon$, for each $x\in E$. To this end, we define a mapping $E\times\Db\ni(x,\bar\omega)\mapsto\omega^{x,\bar\omega}\in\Db$ such that for each $x=(\xi,t,a)\in E$, the mapping $\bar\omega=(\bar\xi_{\cdot},\bar t_{\cdot},\bar a_{\cdot})\mapsto\alpha^x(\bar\omega):=\omega^{x,\bar\omega}=(\xi_{\cdot},t_{\cdot},a_{\cdot})$ modifies the path $\bar\omega$ as follows:
  \begin{equation}\label{eq:path_modification}
    \left\{
      \begin{array}{cll}
        \xi_r(\di y) &=& \int \bar\xi_r(\dx)m^\xi(x,\di y),\\
        t_r &=& \bar t_r + t - \bar t_0, \\
        a_r &=& a + \int_0^{r\wedge\tau_0} \int x\xi_{u-}(\dx) \dt_u,				
      \end{array}
    \right.
  \end{equation}
  where the family $m^\xi(\cdot,\dy)$ satisfies $\Wc_1(\bar\xi_0,\xi)=\iint|x-y|\bar\xi_0(\dx)m^\xi(x,\dy)$, and $\tau_0=\inf\{r:t_r=T\}$.  Then $(x,\bar\omega)\mapsto\omega^{x,\bar\omega}$ is $\Bc(E)\times\Bc(\Db)$ measurable. 
  Hence, for any $\bar\P\in\cup_{x\in E}\Pc_x$, defining $\bar\nu_x:=\bar\P\circ (\alpha^x)^{-1}$, $x\in E$, yields a measurable kernel $\bar\nu$ with $\bar\nu_x\in\Pc_x$. Indeed, the martingale property of $\xi$ under $\bar\nu_x$ follows as in the proof of Lemma \ref{lem:Ucontinuity}. 
  Further, from \eqref{eq:path_modification} we have that
  \begin{align*}
    \left|\bar a_\infty-a_\infty\right| 
    &\le \int_{\bar t_0}^{T}
           \left|\int x\bar\xi_{\bar t^{-1}_s}(\dx)-\int x \xi_{\bar t^{-1}_s}(\dx)\right|\di s\\
    & \qquad {} +\int_{T-(t-\bar t_0)}^T\left(\int x\bar\xi_{\bar t^{-1}_s}(\dx)
       +\int x\xi_{\bar t^{-1}_s}(\dx)\right)\di s + |\bar a_0-a|.
  \end{align*}
  Proceeding as in the proof of Lemma \ref{lem:Ucontinuity}, for any $\bar\P\in\Pc_{\bar x}$ with $\bar x=(\bar\xi,\bar t,\bar a)$, we then have that
  \begin{align*}
    \E^{\bar\P}\big[\big|A_\infty(\bar\omega)-  A_\infty(\alpha^x\circ \bar\omega)\big|\big] 
    \le T\;\Wc_1(\bar\xi,\xi) + (t-\bar t)\int x\bar\xi(\dx)\vee \int x\xi(\dx) + |\bar a-a|,
  \end{align*}
  and with $\bar\nu_x=\bar\P\circ (\alpha^x)^{-1}$, the Lipschitz property of $F$ thus yields $|\E^{\bar \P}[G]-\E^{\bar\nu_x}[G]|\le\delta^{\bar\xi}(d(\bar x,x))$ for some modulus of continuity $\delta^{\bar\xi}$. Now, let $\varepsilon>0$, and let $\{x^n\}_{n\in\mathbb{N}}$ be a countable dense subset of $E$. For each $n$, let $\P_n\in\Pc_{x^n}$ such that $\E^{\P_n}[G]\ge v(x^n)-\frac{\varepsilon}{3}$. Further, for each $x^n$, let $r_n$ such that for all $x\in B^n:=\{x\in E: d(x,x^n)\le r_n\}$, it holds that $v(x^n)\ge v(x)-\frac{\varepsilon}{3}$ and $\left|\E^{\P_n}[G]-\E^{\nu^n_x}[G]\right| \le \frac{\varepsilon}{3}$ with $\nu^n_x := \P_n\circ (\alpha^x)^{-1}$; the existence of such $r_n$, $n\in\Nb$, follows from the above and Lemma \ref{lem:Ucontinuity}. We then define the measurable kernel $(\nu_x)_{x\in E}$ by
  \begin{equation}\label{eq:def_nu}
    \nu_x:=\sum_{n\in\Nb}\ind_{C^n}(x)\;\P_n\circ (\alpha^x)^{-1}, 
    \quad\textrm{where}\quad 
    C^n=B^n\setminus\bigcup_{k-1}^{n-1}B^k.
  \end{equation}
  By construction, for $x\in C^n$, $n\in\Nb$, we then have that 
  \begin{equation*}
    \E^{\nu_x}[G]
    \ge \E^{\P_n}[G]-\frac{1}{3}\varepsilon
    \ge v(x^n)- \frac{2}{3}\varepsilon
    \ge v(x) - \varepsilon. 
  \end{equation*}
  Hence, $\nu$ is a measurable kernel with $\nu_x\in\Pc_x$ and $\E^{\nu_x}[G]\ge v(x)-\varepsilon$, for $x\in E$.

  In order to conclude, we take $x_0\in E$, $\mu\in\Pc_{x_0}$, $\tau\in\mathcal{T}$ and $\nu$ as constructed in \eqref{eq:def_nu}. Since $\xi_\cdot(f)$ is a bounded $\mu$-martingale for any $f\in C_b(\R_+)$, we may use the same arguments as in the proof of Proposition 3.10 in \cite{Zitkovic:2014aa} to deduce that $\xi_\cdot(f)$ is also a $\mu*_\tau\nu$-martingale. We may thus conclude that $\mu*_\tau\nu\in\Pc_{x_0}$. Letting $g(x)= \E^{\nu_x}[G]$ and noticing that $g$ is measurable, we thus obtain
  \begin{equation*}
    v(x_0)
    \ge \E^{\mu*_\tau\nu}[G]
    = \E^\mu[g(X_\tau)]
    \ge \E^\mu[v(X_\tau)]-\varepsilon. 
  \end{equation*}		
  Since $\varepsilon$ and $\mu\in\Pc_{x_0}$ were both chosen arbitrarily, we obtain $v(x)\ge\sup_{\mu\in\Pc_x}\E\left[v(X_\tau)\right]$ and conclude.
\end{proof}
     
The above proof exploits the continuity properties of our problem in order to construct an approximately optimal measurable kernel; see \cite{Aksamit:2016aa} and \cite{Bouchard:2011aa} for similar approaches.

\renewbibmacro*{in:}{}

\printbibliography

\end{document}
